\newtheorem{Corollary}{Corollary}
\newtheorem{Remark}{Remark}
\newtheorem{Definition}{Definition}
\newtheorem{Proposition}{Proposition}
\newenvironment{proof}{\paragraph{Proof:}}{\hfill$\square$}
\def\BC{\mathrm{BC}}
\def\JSD{\mathrm{JSD}}
\def\tD{{\tilde{D}}}
\def\erf{\mathrm{erf}}
\def\dt{\mathrm{d}t}
\def\TV{\mathrm{TV}}
\def\bbR{\mathbb{R}}
\def\eqdef{{:=}}
\def\KL{{\mathrm{KL}}}
\def\dmu{{\mathrm{d}\mu}}
\def\tm{{\tilde{m}}}
\def\JS{\mathrm{JS}}
\def\calX{\mathcal{X}}
\def\calE{\mathcal{E}}
\def\calN{\mathcal{N}}
\def\tr{\mathrm{tr}}
\def\inner#1#2{\left\langle #1,#2\right\rangle}
\def\PD{\mathrm{PD}}
\def\st{{\ :\ }}
\def\tM{{\tilde{M}}}
\def\eps{\epsilon}
\def\tM{{\tilde{M}}}
\def\tG{{\tilde{G}}}
\def\calG{\mathcal{G}}
\title{Two tales for a geometric Jensen--Shannon divergence\footnote{This work appeared in~\cite{nielsen2025two}.}}
\author{Frank Nielsen\ \orcidlink{0000-0001-5728-0726}\\ \ \\ Sony Computer Science Laboratories Inc., Japan}
\date{} 
\begin{document}
\maketitle

\sloppy

\begin{abstract}
The geometric Jensen--Shannon divergence (G-JSD) gained popularity in machine learning and information sciences thanks to its closed-form expression between Gaussian distributions. In this work, we introduce an alternative definition of the geometric Jensen--Shannon divergence tailored to positive densities which does not normalize geometric mixtures. 
This novel divergence is termed the extended G-JSD as it applies to the more general  case of positive measures.
We report explicitly the gap between the extended G-JSD  and the G-JSD when considering probability densities, 
and show how to express the G-JSD and extended G-JSD using the Jeffreys divergence and the Bhattacharyya distance or Bhattacharyya coefficient.
The extended G-JSD is proven to be a $f$-divergence which is a separable divergence satisfying information monotonicity and invariance in information geometry.  
We derive corresponding closed-form formula for the two types of G-JSDs when considering the case of multivariate Gaussian distributions often met in applications.
We consider Monte Carlo stochastic estimations and approximations of the two types of G-JSD using the projective $\gamma$-divergences.
Although the square root of the JSD yields a metric distance, we show that this is not anymore the case for the two types of G-JSD.
Finally, we explain how these two types of geometric JSDs can be interpreted as regularizations of the ordinary JSD.
\end{abstract}

\noindent {\bf Keywords}: Jensen--Shannon divergence; quasi-arithmetic means; total variation distance; Bhattacharyya distance; Chernoff information; Jeffreys divergence; Taneja divergence;  geometric mixtures; exponential families; projective $\gamma$-divergences; $f$-divergence; separable divergence; information monotonicity.

\section{Introduction}

\subsection{Kullback--Leibler and Jensen--Shannon divergences}

Let $(\calX,\calE,\mu)$ be a measure space on the sample space $\calX$, $\sigma$-algebra of events $\calE$, with $\mu$ a prescribed positive measure on the measurable space $(\calX,\calE)$ (e.g., counting measure or Lebesgue measure). Let $M_+(\calX)=\{Q\}$ be the set of positive distributions $Q$ and $M_+^1(\calX)=\{P\}$ be the subset of probability measures $P$. We denote by  $M_\mu=\{\frac{\mathrm{d}Q}{\dmu} \st Q\in M_+(\calX)\}$ and 
$M^1_\mu=\{\frac{\mathrm{d}P}{\dmu} \st P\in M_+^1(\calX)\}$ the corresponding sets of Radon-Nikodym positive and probability densities, respectively.

Consider two probability measures $P_1$ and $P_2$ of $M_+^1(\calX)$ with Radon-Nikodym densities with respect to $\mu$
 $p_1\eqdef\frac{\mathrm{d}P_1}{\dmu}\in M^1_\mu$ and $p_2\eqdef\frac{\mathrm{d}P_2}{\dmu}\in M_\mu^1$, respectively.
The deviation of $P_1$ to $P_2$ (also called distortion, dissimilarity, or deviance) is commonly measured in information theory~\cite{cover1999elements} by the Kullback--Leibler divergence (KLD): 
\begin{equation}\label{eq:kld}
\KL(p_1,p_2) \eqdef \int p_1\log\frac{p_1}{p_2}\,\dmu = E_{p_1}\left[\log\frac{p_1}{p_2}\right].
\end{equation}

Informally, the KLD quantifies the information lost when $p_2$ is used to approximate $p_1$ by measuring on average the surprise when outcomes sampled from $p_1$ are assumed to emanate from $p_2$:
Shannon entropy $H(p)=\int p\log\frac{1}{p}\,\dmu$ is the expected surprise $H(p)=E_p[-\log p]$ where $-\log p(x)$ measures the surprise of the outcome $x$. Logarithms are taken to base $2$ when information is measured in bits, and to base $e$ when it is measured in nats.
 Gibbs' inequality assert that $\KL(P_1,P_2)\geq 0$ with equality if and only if $P_1=P_2$ $\mu$-almost everywhere.
Since $\KL(p_1,p_2)\not=\KL(p_2,p_1)$, various symmetrization schemes of the KLD have been proposed in the literature~\cite{cover1999elements} (e.g., Jeffreys divergence~\cite{cover1999elements,jeffreys1998theory}, resistor average divergence~\cite{johnson2001symmetrizing} (harmonic KLD symmetrization), Chernoff information~\cite{cover1999elements},   etc.)

An important symmetrization technique of the KLD is the Jensen--Shannon divergence~\cite{JSD-1991,fuglede2004jensen} (JSD):
\begin{equation}\label{eq:jsd}
\JS(p_1,p_2)\eqdef \frac{1}{2}\,\left(\KL(p_1,a) + \KL(p_2,a) \right),
\end{equation}
where $a=\frac{1}{2}p_1+\frac{1}{2}p_2$ denotes the statistical mixture of $p_1$ and $p_2$.
The JSD is guaranteed to be upper bounded by $\log 2$ even when the support of $p_1$ and $p_2$ differ, making it attractive in applications.
Furthermore, its square root $\sqrt{\JS}$ yields a metric distance~\cite{endres2003new,okamura2023metrization}.

The JSD can be extended to a set of densities to measure the  diversity of the set as an information radius~\cite{sibson1969information}.
In information theory, the JSD can also be interpreted as an information gain~\cite{endres2003new} since it can be equivalently written as
$$
\JS(p_1,p_2)= H\left(\frac{1}{2}p_1+\frac{1}{2}p_2\right)-\frac{H(p_1)+H(p_2)}{2},
$$
where $H(p)=-\int p\log p\, \dmu$ is Shannon entropy (Shannon entropy for discrete measures  and differential entropy for continuous measures).
The JSD was also defined in the setting of quantum information~\cite{briet2009properties} where it was also proven that its square root yields a metric distance~\cite{virosztek2021metric}.

\begin{Remark}
Both the KLD and the JSD belong to the family of $f$-divergences~\cite{AliSilvey-1966,Csiszar-1967} defined for a convex generator $f(u)$ (strictly convex at $1$) by:
$$
I_f(p_1,p_2) \eqdef \int p_1\, f\left(\frac{p_2}{p_1}\right) \,\dmu.
$$
Indeed, we have $\KL(p_1,p_2)=I_{f_\KL}(p_1,p_2)$ and $\JS(p_1,p_2)=I_{f_\JS}(p_1,p_2)$ for the following generators:
\begin{eqnarray*}
f_\KL(u) &\eqdef& -\log u,\\
f_\JS(u) &\eqdef& -(1+u)\log\frac{1+u}{2}+u\log u.
\end{eqnarray*}
The family of $f$-divergences are the invariant divergences in information geometry~\cite{IG-2016,nielsen2020elementary,nielsen2022many}.
The $f$-divergences guarantee  information monotonicity by coarse graining~\cite{IG-2016} 
(also called lumping in information theory~\cite{csiszar2004information}). 
Using Jensen inequality, we get that $I_f(p_1,p_2)\geq f(1)$.
\end{Remark}

\begin{Remark}
The metrization of $f$-divergences was studied in~\cite{osterreicher2003new}.
Once a metric distance $D(p_1,p_2)$ is given, we may use the following metric transform~\cite{schoenberg1938metric} to obtain another metric which is guaranteed to be bounded by $1$:
$$
0\leq d(p_1,p_2)=\frac{D(p_1,p_2)}{1+D(p_1,p_2)}\leq 1.
$$
\end{Remark}

\subsection{Jensen--Shannon symmetrization of dissimilarities with generalized mixtures}

In~\cite{GJSD-2019}, a generalization of the KLD Jensen--Shannon symmetrization scheme~\cite{nielsen2020generalization} was studied for arbitrary statistical dissimilarity $D(\cdot,\cdot)$ by using an arbitrary weighted mean~\cite{bullen2013handbook} $M_\alpha$.
A generic weighted mean $M_\alpha(a,b)=M_{1-\alpha}(b,a)$ for $a,b\in\bbR_{>0}$ is a continuous symmetric monotonic map $\alpha\in [0,1]\mapsto M_\alpha(a,b)$ such that $M_0(a,b)=b$ and $M_1(a,b)=1$.
For example, the quasi-arithmetic means~\cite{bullen2013handbook} are defined according to a monotonous continuous function $\phi$ as follows:
$$
M_\alpha^\phi(a,b)\eqdef \phi^{-1}\left(\alpha\phi(a)+(1-\alpha)\phi(b)\right).
$$ 
When $\phi_p(u)=u^p$, we get the $p$-power mean $M_\alpha^{\phi_p}(a,b)=(\alpha a^p+(1-\alpha) b^p)^{\frac{1}{p}}$ for $p\in\bbR\backslash\{0\}$.
We extend $\phi_p$ for $p=0$ by defining $\phi_0(u)=\log u$, and get $M_\alpha^{\phi_0}(a,b)=a^\alpha b^{1-\alpha}$, the weighted geometric mean $G_\alpha$.

Let us recall the generalization of the Jensen--Shannon symmetrization scheme of a dissimilarity measure presented in~\cite{GJSD-2019}:

\begin{Definition}[$(\alpha,\beta)$ M-JS dissimilarity~\cite{GJSD-2019}]\label{def:MJSD}
The Jensen--Shannon skew symmetrization of a statistical dissimilarity $D(\cdot,\cdot)$ with respect to an arbitrary weighted bivariate mean $M_\alpha(\cdot,\cdot)$ is given by:

\begin{equation}\label{eq:MJSD}
D^\JS_{M_\alpha,\beta}(p_1,p_2) \eqdef \beta \, D\left(p_1,{(p_1p_2)}_{M_\alpha}\right) + (1-\beta)\, D\left(p_2,{(p_1p_2)}_{M_\alpha}\right),\quad 
(\alpha,\beta) \in (0,1)^2,  
\end{equation}

where ${(p_1p_2)}_{M_\alpha}$ is the statistical normalized weighted  $M$-mixture of $p_1$ and $p_2$:

\begin{equation}
{(p_1p_2)}_{M_\alpha}(x) \eqdef \frac{M_\alpha(p_1(x),p_2(x))}{\int M_\alpha(p_1(x),p_2(x))\,\dmu(x)}.
\end{equation}
\end{Definition}

\begin{Remark}
A more general definition is given in~\cite{GJSD-2019} by using another arbitrary weighted mean $N_\beta$ to average the two dissimilarities in Eq.~\ref{eq:MJSD}:
\begin{equation}\label{eq:MNJSD}
D^\JS_{M_\alpha,N_\beta}(p_1,p_2) \eqdef  N_\beta\left( D\left(p_1,{(p_1p_2)}_{M_\alpha}\right) , D\left(p_2,{(p_1p_2)}_{M_\alpha}\right)\right),\quad 
(\alpha,\beta) \in (0,1)^2. 
\end{equation}
When $N_\beta=A_\alpha$ the weighted arithmetic mean $A_\alpha(a,b)=\alpha a+(1-\alpha)b$, Eq.~\ref{eq:MNJSD} amounts to Eq.~\ref{eq:MJSD}.
\end{Remark}

When $\alpha=\frac{1}{2}$, we   write for short ${(p_1p_2)}_M$ instead of ${(p_1p_2)}_{M_{\frac{1}{2}}}$   in the reminder.

When $D=\KL$, $M=N=A_{\frac{1}{2}}$, Eq.~\ref{eq:MNJSD} yields the Jensen--Shannon divergence of Eq.~\ref{eq:jsd}:
$\JS(p_1,p_2)=\KL^\JS_{A_{\frac{1}{2}},A_{\frac{1}{2}}}(p_1,p_2)=\KL^\JS_{A,A}(p_1,p_2)$.

Lower and upper bounds for the skewed $\alpha$-Jensen--Shannon divergence  were reported in~\cite{yamano2019some}.

The abstract mixture normalizer of  ${(p_1p_2)}_{M_\alpha}$ shall be denoted by 
$$
Z_{M_\alpha}(p_1,p_2)  \eqdef \int M_\alpha(p_1(x),p_2(x))\,\dmu(x),
$$
 so that the normalized $M$-mixture is written as 
${(p_1p_2)}_{M_\alpha}(x)=\frac{M_\alpha(p_1(x),p_2(x))}{Z_{M_\alpha}(p_1,p_2)}$.
The normalizer $Z_{M_\alpha}(p_1,p_2)$ is always finite and thus the weighted $M$-mixtures ${(p_1p_2)}_{M_\alpha}$ are well-defined:

\begin{Proposition}
For any generic weighted mean $M_\alpha$, we have the normalizer of the weighted $M$-mixture bounded by $2$:
$$
0\leq Z_{M_\alpha}(p_1,p_2)\leq 2.
$$
\end{Proposition}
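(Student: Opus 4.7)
The plan is to exploit the universal \emph{in-betweenness} property of any weighted mean, namely $\min(a,b)\leq M_\alpha(a,b)\leq \max(a,b)$ for all $a,b>0$ and all $\alpha\in[0,1]$. This sandwiching is the only structural fact about $M_\alpha$ that I will need: continuity and monotonicity in $\alpha$ together with the boundary conditions $M_0(a,b)=b$ and $M_1(a,b)=a$ force $M_\alpha(a,b)$ to take a value between $a$ and $b$ by an intermediate-value argument.

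First I would apply this pointwise to $a=p_1(x)$, $b=p_2(x)$, yielding
\[
0\leq \min(p_1(x),p_2(x))\leq M_\alpha(p_1(x),p_2(x))\leq \max(p_1(x),p_2(x))\leq p_1(x)+p_2(x)
\]
for $\mu$-almost every $x$. The two outer inequalities use only that $p_1,p_2\geq 0$ as Radon-Nikodym densities of probability measures.

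Then I would integrate against $\dmu$. The lower bound gives $Z_{M_\alpha}(p_1,p_2)\geq 0$ immediately, while the upper bound gives
\[
Z_{M_\alpha}(p_1,p_2)\leq \int \bigl(p_1(x)+p_2(x)\bigr)\,\dmu(x) = 1 + 1 = 2,
\]
since $p_1,p_2\in M^1_\mu$ are probability densities. This also shows that the integrand is $\mu$-integrable, so $Z_{M_\alpha}(p_1,p_2)$ is indeed finite and the normalized mixture ${(p_1p_2)}_{M_\alpha}$ is well defined whenever it is strictly positive.

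There is no real obstacle here; the only subtle point is justifying the in-betweenness from the axioms listed in the paper. If one wanted to avoid invoking it as folklore, one could note that by continuity and monotonicity of $\alpha\mapsto M_\alpha(a,b)$ on $[0,1]$ together with $M_0(a,b)=b$ and $M_1(a,b)=a$, the image of $[0,1]$ is exactly the closed interval with endpoints $a$ and $b$, which gives the bound. Equality $Z_{M_\alpha}=2$ would require $p_1=p_2$ $\mu$-a.e.\ and $M_\alpha(a,a)=a$ (an immediate consequence of the boundary conditions together with monotonicity), so the upper bound is actually attained precisely when $p_1=p_2$.
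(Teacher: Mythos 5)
Your proof is correct and follows essentially the same route as the paper: both rest on the in-betweenness property $\min\leq M_\alpha\leq\max$ applied pointwise and then integrated. The only difference is cosmetic — you close with $\max(a,b)\leq a+b$, whereas the paper writes $\min$ and $\max$ as $\frac{a+b}{2}\mp\frac{1}{2}|a-b|$ to record the sharper sandwich $1-\TV(p_1,p_2)\leq Z_{M_\alpha}(p_1,p_2)\leq 1+\TV(p_1,p_2)$, which it reuses later.
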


\begin{proof}
Since $M_\alpha$ is a scalar weighted mean, it satisfies the following in-betweenness property: 
\begin{equation}
\min\{p_1(x),p_2(x)\} \leq M_\alpha(p_1(x),p_2(x)) \leq \max\{p_1(x),p_2(x)\}.
\end{equation}
Hence, by using the following two identities for $a\geq 0$ and $b\geq 0$: 
\begin{eqnarray*}
\min\{a,b\} &=& \frac{a+b}{2}-\frac{1}{2}|a-b|,\\
\max\{a,b\} &=& \frac{a+b}{2}+\frac{1}{2}|a-b|,
\end{eqnarray*}
 we  get
\begin{eqnarray}
\int \min\{p_1(x),p_2(x)\}\, \dmu(x) \leq &\int M_\alpha(p_1(x),p_2(x))\, \dmu(x)& \leq \int \max\{p_1(x),p_2(x)\}\, \dmu(x),\nonumber\\
0\leq 1-\TV(p_1,p_2) \leq & Z_{M_\alpha}(p_1,p_2) & \leq 1+\TV(p_1,p_2)\leq 2,\label{eq:zub}
\end{eqnarray}
where 
$$
\TV(p_1,p_2)\eqdef\frac{1}{2}\int |p_1-p_2|\,\dmu,
$$ 
is the total variation distance, upper bounded by $1$.
When the support of the densities $p_1$ and $p_2$ intersect (i.e., non-singular probability measures $P_1$ and $P_2$), we have $Z_{M_\alpha}(p_1,p_2)>0$ 
and therefore the weighted $M$-mixtures $(p_1p_2)_{M_\alpha}$ are well-defined.
\end{proof}

The generic Jensen--Shannon symmetrization of dissimilarities given in Definition~\ref{def:MJSD} allows us to re-interpret some well-known statistical dissimilarities:

For example, the Chernoff information~\cite{cover1999elements,nielsen2022revisiting} is defined by 
\begin{equation}\label{eq:ci}
C(p_1,p_2)\eqdef \max_{\alpha\in(0,1)} B_\alpha(p_1,p_2),
\end{equation}
 where $B_\alpha(p_1,p_2)$ denotes the $\alpha$-skewed Bhattacharrya distance:
\begin{equation}
B_\alpha(p_1,p_2) \eqdef -\log\int p_1^\alpha\, p_2^{1-\alpha}\,\dmu
\end{equation} 
When $\alpha=\frac{1}{2}$, we note $B(p_1,p_2)=B_{\frac{1}{2}}(p_1,p_2)$ the Bhattacharrya distance.
Notice that the Bhattacharrya distance is not a metric distance as it violates the triangle inequality of metrics.

Using the framework of JS-symmetrization of dissimilarities, we can reinterpret the Chernoff information as
$$
C(p_1,p_2)=({\KL^*})^\JS_{G_{\alpha^*},A_{\frac{1}{2}}}(p_1,p_2),
$$
where $\alpha^*$ is provably the unique optimal skewing factor in Eq.~\ref{eq:ci} such that we have~\cite{nielsen2022revisiting}: 
\begin{eqnarray*}
C(p_1,p_2) &=& \KL^*(p_1,(p_1p_2)_{G_{\alpha^*}})=\KL^*(p_2,(p_1p_2)_{G_{\alpha^*}}),\\
&=& \frac{1}{2}\,\left( \KL^*(p_1,(p_1p_2)_{G_{\alpha^*}}) + \KL^*(p_2,(p_1p_2)_{G_{\alpha^*}}) \right),
\end{eqnarray*}
where ${\KL}^*$ denotes the reverse KLD: 
$$
\KL^*(p_1,p_2)\eqdef \KL(p_2,p_1).
$$
Note that KLD is sometimes called the forward KLD (e.g.,\cite{jerfel2021variational}), and we have ${\KL^*}^*(p_1,p_2)=\KL(p_1,p_2)$.

Although arithmetic mixtures are most often used in Statistics, the geometric mixtures are also encountered, like for example in Bayesian statistics~\cite{asadi2018mixture}, or in Markov chain Monte Carlo annealing~\cite{grosse2013annealing}, just to give two examples. 
In information geometry, statistical power mixtures based on the homogeneous power means are used to perform stochastic integration of statistical models~\cite{amari2007integration}.

\begin{Proposition}[Bhattacharyya distance as G-JSD]\label{prop:JSBhat}
The Bhattacharyya distance~\cite{bhattacharyya1946measure} and the $\alpha$-skewed Bhattacharyya distances can be interpreted as  JS-symmetrizations of the reverse KLD with respect to the geometric mean $G$:
\begin{eqnarray*}
B(p_1,p_2) &\eqdef& -\log\int \sqrt{p_1\, p_2} \,\dmu=(\KL^*)^\JS_{G}(p_1,p_2),\\
B_\alpha(p_1,p_2) &\eqdef& -\log\int p_1^\alpha\, p_2^{1-\alpha} \,\dmu=(\KL^*)^\JS_{G_\alpha}(p_1,p_2).
\end{eqnarray*}
\end{Proposition}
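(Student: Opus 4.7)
My plan is to unfold the definitions and reduce the identity to a pointwise cancellation. Write $g \eqdef (p_1p_2)_{G_\alpha} = p_1^\alpha p_2^{1-\alpha}/Z_\alpha$, where $Z_\alpha \eqdef \int p_1^\alpha p_2^{1-\alpha}\,\dmu$ (finite and positive by Proposition~1 under the usual support overlap), so that $B_\alpha(p_1,p_2) = -\log Z_\alpha$ by definition. Reading $(\KL^*)^\JS_{G_\alpha}$ as Definition~\ref{def:MJSD} applied with the outer skew $\beta=\alpha$ matched to the mean, the target identity becomes
$$
\alpha\,\KL^*(p_1, g) + (1-\alpha)\,\KL^*(p_2, g) = -\log Z_\alpha.
$$

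Since $\KL^*(p_i, g) = \KL(g, p_i) = \int g\log(g/p_i)\,\dmu$, it suffices to compute the $\alpha$-weighted integrand $T \eqdef \alpha\log(g/p_1) + (1-\alpha)\log(g/p_2)$. Substituting $\log g = \alpha\log p_1 + (1-\alpha)\log p_2 - \log Z_\alpha$, the $\alpha\log p_1$ and $(1-\alpha)\log p_2$ contributions cancel pointwise, leaving $T = -\log Z_\alpha$, a constant. Integrating against $g$ and using $\int g\,\dmu=1$ gives exactly $-\log Z_\alpha = B_\alpha(p_1,p_2)$. The symmetric case $\alpha=1/2$ is a specialization; in fact for $\alpha=1/2$ both summands $\KL(g,p_1)$ and $\KL(g,p_2)$ coincide by symmetry, so any outer weight $\beta\in[0,1]$ recovers $B(p_1,p_2) = -\log\int\sqrt{p_1p_2}\,\dmu$.

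There is no real technical obstacle — the entire argument is the pointwise cancellation above plus normalization of $g$. The only point worth flagging is bookkeeping: the abbreviated notation $(\KL^*)^\JS_{G_\alpha}$ suppresses the outer skew parameter $\beta$ from Definition~\ref{def:MJSD}, and the identity as stated genuinely requires $\beta = \alpha$ (the natural matching); I would either state this convention explicitly at the start of the proof or write the symmetrization in full as $(\KL^*)^\JS_{G_\alpha,\alpha}$ to avoid ambiguity with the $\beta=\tfrac{1}{2}$ convention used earlier for the Chernoff identity.
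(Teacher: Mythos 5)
Your proof is correct and follows essentially the same route as the paper's: unfold the definition, observe that the $\alpha$-weighted log-ratios cancel pointwise leaving the constant $-\log Z_\alpha$, and integrate against the normalized geometric mixture. Your bookkeeping remark is also apt — the paper's own proof does use the matched outer skew, writing ${\KL^*}^{\JS}_{G_\alpha,\alpha}$ for the skewed case.

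One side claim is wrong, though it does not affect the result: for $\alpha=\tfrac{1}{2}$ the two summands $\KL(g,p_1)$ and $\KL(g,p_2)$ do \emph{not} coincide in general. One has $\KL(g,p_i)=-\log Z \pm \tfrac{1}{2}\int g\log\frac{p_2}{p_1}\,\dmu$, and the integral $\int g\log\frac{p_2}{p_1}\,\dmu$ is generically nonzero (e.g., $p_1=(0.5,0.3,0.2)$, $p_2=(0.2,0.5,0.3)$), so an outer weight $\beta$ yields $-\log Z+(\beta-\tfrac{1}{2})\int g\log\frac{p_2}{p_1}\,\dmu$ and only $\beta=\tfrac{1}{2}$ recovers $B(p_1,p_2)$. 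Equality of the two summands is the defining property of the \emph{optimal} skew $\alpha^*$ in the Chernoff information, not a feature of the symmetric choice $\alpha=\tfrac12$.
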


\begin{proof}
Let $m=(p_1p_2)_G=\frac{\sqrt{p_1p_2}}{Z(p_1,p_2)}$ denote the weighted geometric mixture 
with normalizer $Z_G(p_1,p_2)=\int \sqrt{p_1p_2}\,\dmu$.
By definition of the JS-symmetrization of the reverse KLD, we have
\begin{eqnarray*}
({\KL^*})_G^\JS(p_1,p_2) &\eqdef& \frac{1}{2}\left(  {\KL^*}(p_1,(p_1p_2)_G) +  {\KL^*}(p_2,(p_1p_2)_G) \right),\\
&=& \frac{1}{2}\left(  {\KL}((p_1p_2)_G,p_1) +  {\KL}((p_1p_2)_G,p_2) \right),\\
&=& \frac{1}{2}\left(\int \left(  m \,\log \frac{\sqrt{p_1p_2} }{p_1\, Z_G(p_1,p_2)}  
+ m\, \log \frac{\sqrt{p_1p_2}}{p_2\, Z_G(p_1,p_2)} \right)\,\dmu
 \right),\\
&=&  \frac{1}{2}\left(\int \frac{1}{2} m\log \frac{p_2}{p_1}\,\frac{p_1}{p_2} \dmu   -2\log Z_G(p_1,p_2) \int m\,\dmu \right),\\
&=& -\log Z_G(p_1,p_2)=:B(p_1,p_2).
\end{eqnarray*}

The proof carries on similarly for the $\alpha$-skewed JS-symmetrization of the reverse KLD:
We now let $m_\alpha=(p_1p_2)_{G_\alpha}=\frac{p_1^\alpha p_2^{1-\alpha}}{Z_{G_\alpha}(p_1,p_2)}$ be the $\alpha$-weighted geometric mixture 
with normalizer $Z_{G_\alpha}(p_1,p_2)=\int p_1^\alpha p_2^{1-\alpha}\,\dmu$, written as $Z_{G_\alpha}$ for short below:

\begin{eqnarray*}
{\KL^*}_{G_\alpha,\alpha}^\JS(p_1,p_2) &\eqdef&  \alpha\,  {\KL^*}(p_1,(p_1p_2)_{G_\alpha}) + (1-\alpha)\, {\KL^*}(p_2,(p_1p_2)_{G_\alpha}),\\
&=&   \alpha\,  {\KL}(m_\alpha, p_1) + (1-\alpha)\, {\KL}(m_\alpha,p_2) ,\\
&=&  \int\left( \alpha m_\alpha\log\frac{p_1^\alpha\, p_2^{1-\alpha}}{Z_{G_\alpha}\, p_1} 
+(1-\alpha) m_\alpha\log\frac{p_1^\alpha p_2^{1-\alpha}}{Z_{G_\alpha}\, p_2} 
 \right)\dmu,\\
&=&  -(\alpha+1-\alpha)\log Z_{G_\alpha} \int m_\alpha\,\dmu + \int m_\alpha \log \left(\frac{p_2}{p_1}\right)^{\alpha(1-\alpha)}\, \left(\frac{p_1}{p_2}\right)^{\alpha(1-\alpha)} \dmu,\\
&=& -\log Z_{G_\alpha}(p_1,p_2)=:{B_\alpha}(p_1,p_2).
\end{eqnarray*}
\end{proof}

Besides information theory~\cite{cover1999elements}, the JSD also plays an important role in machine learning~\cite{melville2005active,goodfellow2014generative,sutter2020multimodal}.
However, one drawback that refrains its use in practice is that the JSD between two Gaussian distributions (normal distributions) is not known in closed-form since no analytic formula is known for the differential entropy of a two-component Gaussian mixture~\cite{michalowicz2008calculation}, and thus the JSD needs to be numerically approximated in practice by various methods.

To circumvent this problem, the geometric G-JSD was defined in~\cite{GJSD-2019} as follows:

\begin{Definition}[G-JSD~\cite{GJSD-2019}]\label{def:GJSD}
The geometric Jensen--Shannon divergence (G-JSD) between two probability densities $p_1$ and $p_2$ is defined by
$$
\JS_G(p_1,p_2) \eqdef \frac{1}{2}\, \left( \KL(p_1,(p_1p_2)_G) +  \KL(p_2,(p_1p_2)_G) \right),
$$
where $(p_1p_2)_G(x)=\frac{\sqrt{p_1(x)\, p_2(x)}}{\int \sqrt{p_1(x)\, p_2(x)}\, \dmu}$ is the (normalized) geometric mixture of $p_1$ and $p_2$.
\end{Definition}

We have $\JS_G(p_1,p_2)=\KL^\JS_G(p_1,p_2)$.
Since by default the $M$- mixture  JS-symmetrization of dissimilarities $D$ are done on the right argument (i.e., $D^\JS_M$), we may also consider a dual JS-symmetrization by setting the $M$-mixtures on the left argument. We denote this left mixture JS-symmetrization by $D^{\JS^*}_M$. 
We have $D^{\JS^*}_M(p_1,p_2)=({D^*})^\JS_M(p_1,p_2)$, i.e., the left-sided JS-symmetrization of $D$ amounts to a right-sided JS-symmetrization of the dual dissimilarity $D^*(p_1,p_2)\eqdef D(p_2,p_1)$.

Thus a left-sided G-JSD divergence $\JS_G^*$ was also defined in ~\cite{GJSD-2019}:

\begin{Definition}\label{def:leftGJSD}
The left-sided geometric Jensen--Shannon divergence (G-JSD) between two probability densities $p_1$ and $p_2$ is defined by
\begin{eqnarray*}
\JS_G^*(p_1,p_2) &\eqdef& \frac{1}{2}\, \left( \KL((p_1p_2)_G,p_1) +  \KL((p_1p_2)_G,p_2) \right),\\
&=& \frac{1}{2}\, \left( \KL^*(p_1,(p_1p_2)_G) +  \KL^*(p_2,(p_1p_2)_G) \right),
\end{eqnarray*}
where $(p_1p_2)_G(x)=\frac{\sqrt{p_1(x)\, p_2(x)}}{\int \sqrt{p_1(x)\, p_2(x)}\, \dmu}$ is the (normalized) geometric mixture of $p_1$ and $p_2$.
\end{Definition}

To contrast with the numerical approximation limitation of the JSD between Gaussians, one advantage of the  geometric Jensen--Shannon divergence (G-JSD) is that it admits a closed-form expression between Gaussian distributions~\cite{GJSD-2019}. However, the G-JSD is not anymore bounded.
The G-JSD formula between Gaussian distributions has been used in several scenarii. See~\cite{deasy2020constraining,deasy2021alpha,kumari2023rds,ni2023learning,sachdeva2024uncertainty,wang2023np,serra2024computation,thiagarajan2025jensen,hanselmann2025emperror}) for a few use cases.

Let us express the G-JSD divergence using other familiar divergences.
 
\begin{Proposition}\label{prop:}
We have the following expression of the geometric Jensen--Shannon divergence:
$$
\JS_G(p_1,p_2)=\frac{1}{4}\, J(p_1,p_2) - B(p_1,p_2),
$$ 
where $J(p_1,p_2)\eqdef\int (p_1-p_2)\log\frac{p_1}{p_2}\,\dmu$ is Jeffreys' divergence~\cite{jeffreys1998theory} and 
$$
B(p_1,p_2)=-\log\int\sqrt{p_1p_2}\,\dmu=-\log Z_G(p_1,p_2),
$$ 
is the Bhattacharrya distance. 
\end{Proposition}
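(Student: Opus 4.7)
The plan is a direct computation that unpacks each KLD appearing in Definition~\ref{def:GJSD} and then collects the resulting pieces into Jeffreys' divergence and the Bhattacharyya distance. The key algebraic identity I will use is that for $m=(p_1p_2)_G=\frac{\sqrt{p_1p_2}}{Z_G(p_1,p_2)}$, the log-ratio splits cleanly as
\[
\log\frac{p_i}{m} \;=\; \log Z_G(p_1,p_2)+\tfrac12\log\frac{p_i}{p_j},\qquad \{i,j\}=\{1,2\}.
\]
This is the whole engine of the proof: it turns the KLD against the (normalized) geometric mixture into a constant log-normalizer term plus a scaled forward KLD between $p_i$ and $p_j$.

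First I would integrate the identity against $p_1$. Since $\int p_1\,\dmu = 1$, this yields $\KL(p_1,m)=\log Z_G(p_1,p_2)+\tfrac12 \KL(p_1,p_2)$. Exchanging the roles of $p_1$ and $p_2$ gives $\KL(p_2,m)=\log Z_G(p_1,p_2)+\tfrac12\KL(p_2,p_1)$. I would then average the two identities, as prescribed by Definition~\ref{def:GJSD}:
\[
\JS_G(p_1,p_2) \;=\; \log Z_G(p_1,p_2) + \tfrac14\bigl(\KL(p_1,p_2)+\KL(p_2,p_1)\bigr).
\]

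To finish, I would observe that $\KL(p_1,p_2)+\KL(p_2,p_1)=\int (p_1-p_2)\log\tfrac{p_1}{p_2}\,\dmu=J(p_1,p_2)$ by definition of Jeffreys' divergence, and that $\log Z_G(p_1,p_2)=-B(p_1,p_2)$ by the definition of the Bhattacharyya distance recalled in Proposition~\ref{prop:JSBhat}. Substituting these two identifications produces $\JS_G(p_1,p_2)=\tfrac14 J(p_1,p_2)-B(p_1,p_2)$, as claimed.

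There is no real obstacle here; the only thing to be mindful of is the bookkeeping of the $\log Z_G$ factor, which appears twice (once from each KLD) and thus survives the $\tfrac12$ averaging as a single $\log Z_G$ term rather than a half of it. A convenient sanity check during the write-up is that both $\JS_G$ and $\tfrac14 J$ are symmetric in $(p_1,p_2)$, and so must $B$ be, which is consistent with the final formula.
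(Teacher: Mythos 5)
Your proof is correct and follows essentially the same route as the paper: expanding $\log\frac{p_i}{(p_1p_2)_G}$ into $\log Z_G(p_1,p_2)+\tfrac12\log\frac{p_i}{p_j}$, integrating to get $\KL(p_i,(p_1p_2)_G)=\log Z_G(p_1,p_2)+\tfrac12\KL(p_i,p_j)$, averaging, and identifying $\log Z_G=-B$. The bookkeeping of the $\log Z_G$ term is handled correctly, so nothing further is needed.
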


\begin{proof}
We have:
\begin{eqnarray*}
\JS_G(p_1,p_2) &:=& \frac{1}{2} \left(\KL(p_1,(p_1p_2)_G)+\KL(p_2,(p_1p_2)_G)\right),\\
&=& \frac{1}{2} \left(\int \left(p_1(x)\log\frac{p_1(x)\, Z_G(p_1,p_2)}{\sqrt{p_1(x)\,p_2(x)}}+ p_2(x)\log\frac{p_2(x)\, Z_G(p_1,p_2)}{\sqrt{p_1(x)\,p_2(x)}}\right)\dmu(x) \right),\\
&=& \frac{1}{2}\left(\int \left(p_1(x)+p_2(x)\right)\log Z_G(p_1,p_2)\,\dmu(x)+\frac{1}{2}\KL(p_1,p_2)+\frac{1}{2}\KL(p_2,p_1)  \right),\\
&=& \log Z_G(p_1,p_2) + \frac{1}{4} J(p_1,p_2),\\
&=& \frac{1}{4} J(p_1,p_2) - B(p_1,p_2).
\end{eqnarray*}
\end{proof}

\begin{Corollary}[G-JSD upper bound]
We have the upper bound $\JS_G(p,q)\leq \frac{1}{4} \, J(p,q)$.
\end{Corollary}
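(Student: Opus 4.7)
The plan is to read off the corollary directly from the identity just established in the preceding proposition, namely
$$
\JS_G(p_1,p_2) \;=\; \tfrac{1}{4}\,J(p_1,p_2) \;-\; B(p_1,p_2),
$$
combined with the single extra fact that the Bhattacharyya distance is nonnegative on probability densities. Once $B(p_1,p_2)\ge 0$ is in hand, subtracting a nonnegative quantity from $\tfrac14 J(p_1,p_2)$ yields the claimed upper bound with no further work.

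To verify $B(p_1,p_2)\ge 0$, I would show that the Bhattacharyya coefficient $Z_G(p_1,p_2)=\int\sqrt{p_1 p_2}\,\dmu$ lies in $[0,1]$ whenever $p_1$ and $p_2$ are probability densities. The cleanest route is the pointwise AM--GM inequality $\sqrt{p_1(x)p_2(x)}\le \tfrac12(p_1(x)+p_2(x))$, which upon integration gives $Z_G(p_1,p_2)\le \tfrac12(1+1)=1$ because $\int p_i\,\dmu=1$. A Cauchy--Schwarz argument $\int\sqrt{p_1p_2}\,\dmu\le\sqrt{\int p_1\,\dmu}\sqrt{\int p_2\,\dmu}=1$ works equally well, and in fact the sharper bound $Z_G(p_1,p_2)\le 1-\tfrac12\TV(p_1,p_2)^2$ would also do. Taking $-\log$ and using monotonicity of the logarithm then yields $B(p_1,p_2)=-\log Z_G(p_1,p_2)\ge 0$.

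Plugging this into the proposition finishes the corollary. There is essentially no obstacle here: the only thing worth flagging is that both arguments must be probability densities, since the bound $Z_G\le 1$ relies on $\int p_i\,\dmu=1$; for general positive measures (the extended setting mentioned in the abstract) the inequality need not hold and the corollary would have to be reformulated.
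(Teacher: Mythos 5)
Your proposal is correct and follows exactly the paper's argument: apply the identity $\JS_G(p_1,p_2)=\frac{1}{4}J(p_1,p_2)-B(p_1,p_2)$ and drop the nonnegative term $B(p_1,p_2)$. The only difference is that you spell out why $B\geq 0$ (via AM--GM or Cauchy--Schwarz giving $Z_G\leq 1$), which the paper simply asserts; your added remark that this requires normalized densities is a fair observation but does not change the argument.
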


\begin{proof}
Since $B(p_1,p_2)\geq 0$ and $\JS_G(p_1,p_2)=\frac{1}{4}\, J(p_1,p_2) - B(p_1,p_2)$, we have  $\JS_G(p,q)\leq \frac{1}{4} \, J(p,q)$.
\end{proof}

\begin{Remark}
Although the KLD and JSD are separable divergences (i.e., $f$-divergences expressed as integrals of scalar divergences), the $M$-JSD divergence is in general not separable because 
it requires to normalize mixtures inside the log terms.
Notice that the Bhattacharyya distance is similarly not a separable divergence but the Bhattacharyya similarity coefficient $\BC(p_1,p_2)=\exp(-B(p_1,p_2))=\int \sqrt{p_1\, p_2}\,\dmu$ is a separable ``$f$-divergence''/$f$-coefficient for $f_\BC(u)=\sqrt{u}$ (here, a concave generator): $\BC(p_1,p_2)=I_{f_\BC}(p_1,p_2)$.
Notice that $f_\BC(1)=1$, and because of the concavity of $f_\BC$, we have $I_{f_\BC}(p_1,p_2)\leq f_\BC(1)=1$ (hence, the term $f$-coefficient to reflect the notion of similarity measure).
\end{Remark}

\subsection{Paper outline}

The paper is organized as follows:
We first give an alternative definition of the M-JSD in \S\ref{sec:ExtGJSD} (Definition~\ref{def:eMJSD}) which extends to positive measures and do not require normalization of geometric mixtures. We call this new divergence the extended M-JSD, and we compare the two types of geometric JSDs when dealing with probability measures.
In \S\ref{sec:reg}, we show that these normalized/extended $M$-JSD divergences can be interpreted as regularizations of the Jensen--Shannon divergence, and exhibit several bounds.
 We discuss Monte Carlo stochastic approximations and approximations using $\gamma$-divergences~\cite{fujisawa2008robust} in \S\ref{sec:est}.
For the case of geometric mixtures, although the G-JSD is not a $f$-divergence, we show that the extended G-JSD is a $f$-divergence (Proposition~\ref{prop:eGJSDfdiv}), and we express both the G-JSD and the extended G-JSD using both the Jeffreys divergence and the Bhattacharyya divergence or coefficient.
We report related closed-form formula for the G-JSD and extended G-JSD between two Gaussian distributions in Section~\ref{sec:gaussian}.
Finally, we summarize the main results in the concluding section \S\ref{sec:concl}.

A list of notations is provided in Appendix~\ref{sec:notations}.

\section{A novel definition G-JSD extended to positive measures}\label{sec:ExtGJSD}

\subsection{Definition and properties}
We may consider the following two modifications of the G-JSD:
\begin{itemize} 

\item First, we replace the KLD by the extended KLD 
between positive densities $q_1\in M_\mu^+$ and $q_2\in M_\mu^+$ instead of normalized densities:
\begin{equation}\label{eq:ekld}
{\KL^+}(q_1,q_2)\eqdef \int \left(q_1\log\frac{q_1}{q_2}+q_2-q_1\right) \, \dmu,
\end{equation}
 (with $\KL^+(p_1,p_2)=\KL(p_1,p_2)$) and,

\item Second, we consider unnormalized $M$-mixture densities: 
$$
{(q_1q_2)}_{\tM_\alpha}(x)\eqdef {M_\alpha(q_1(x),q_2(x))},
$$
 where we use the $\tilde{M}$ tilde notation to indicate that the $M$-mixture is not normalized, instead of normalized densities ${(q_1q_2)}_{M_\alpha}(x)$.
\end{itemize}

{Consider the KLD formula between a normalized density $p_1$ and an unnormalized density $q_2=\lambda p_1$ for some $\lambda>1$.
We have $\KL(p_1,q_2)=\int p_1\log\frac{p_1}{q_2} \,\dmu= \int p_1\log\frac{1}{\lambda}\,\dmu=-\log \lambda<0$.
That is the KLD can be negative between non-normalized  densities.
However, the extended KLD is always guaranteed to be positive for $p_1>0$ and $q_2>0$ since it can be written as a pointwise scalar Bregman divergence integral for the negative Shannon entropy generator~\cite{jones2002general}:

\begin{eqnarray*}
\KL^+(p_1,q_2) &=& \int (p_1\log\frac{p_1}{q_2} + q_2-p_1)\, \dmu,\\
&=& \int B_F(p_1(x),q_2(x))\, \dmu \geq 0,
\end{eqnarray*}
where $F(y)=y\log y-y$ is the extended Shannon negative entropy function:
$B_F(a,b)=a\log\frac{a}{b}+b-a\geq 0$ with equality if and only if $a=b$.
}

The extended KLD is an extended $f$-divergence~\cite{nishimura2008information}: 
${\KL^+}(q_1,q_2)=I_{f_{\KL^+}}^+(q_1,q_2)$ for $f_{\KL^+}(u)=-\log(u)+u-1$, 
where $I_f^+(q_1,q_2)$ denotes the  $f$ divergence extended to positive densities $q_1$ and $q_2$:
$$
I_f^+(q_1,q_2) = \int q_1\, f\left(\frac{q_2}{q_1}\right)\, \dmu.
$$

\begin{Remark}
As a side remark, it is preferable in practice to estimate the KLD between $p_1$ and $p_2$ by Monte Carlo methods using Eq.~\ref{eq:ekld} instead of Eq.~\ref{eq:kld} in order to guarantee the non-negativeness of the KLD (Gibbs' inequality).
Indeed, the sampling of $s$ samples $x_1,\ldots, x_s$, defines two unnormalized distributions $q_1(x)=\frac{1}{s}\sum_{i=1}^s p_1(x)\delta_{x_i}(x)$ and 
$q_2(x)=\frac{1}{s}\sum_{i=1}^s p_2(x)\delta_{x_i}(x)$ where 
$$
\delta_{x_i}(x)=\left\{\begin{array}{ll}1, & \mbox{if $x=x_i$}\cr 0,& \mbox{otherwise}\end{array}\right..
$$
\end{Remark}

\begin{Remark}
For an arbitrary distortion measure $D^+(q_1,q_2)$ between positive measures $q_1$ and $q_2$, we can build a corresponding projective divergence  $\tD(q_1,q_2)$ 
as follows:
$$
\tD(q_1,q_2)\eqdef D^+\left(\frac{q_1}{Z(q_1)},\frac{q_1}{Z(q_2)}\right),
$$
where $Z(q)\eqdef \int q\,\dmu$ is the normalization factor of the positive density $q$.
The divergence $\tD$ is said projective because we have for all $\lambda_1>0,\lambda_2>0$, the property that 
$\tD(\lambda_1 q_1,\lambda_2 q_2)=\tD(q_1,q_2)=D^+(p_1,p_2)$ where $p_i=\frac{q_i}{Z(q_i)}$ are the normalized densities.
The projective Kullback--Leibler divergence $\widetilde{\KL}$ is thus another projective extension of the KLD to non-normalized densities which coincide with the KLD for probability densities. But the projective KLD is different from the extended KLD of Eq.~\ref{eq:ekld}, and furthermore we have $\widetilde{\KL}(q_1,q_2)=0$ if and only if $q_1=\lambda\, q_2$ $\mu$-almost everywhere for some $\lambda>0$.
\end{Remark}

Let us now define the Jensen--Shannon symmetrization of an extended statistical divergence $D^+$ with respect to an arbitrary weighted mean $M_\alpha$ as follows:

\begin{Definition}[Extended M-JSD]\label{def:eMJSD}
A Jensen--Shannon skew symmetrization of a statistical divergence $D^+(\cdot,\cdot)$ between two positive measures $q_1$ and $q_2$ with respect to a weighted mean $M_\alpha$ is defined by
\begin{equation}\label{eq:edjs}
D^{\JS^+}_{\tM_\alpha,\beta}(q_1,q_2)
 \eqdef 
\beta \, D^+\left(q_1,{(q_1q_2)}_{\tM_\alpha}\right)
 + (1-\beta )\, D^+\left(q_1,{(q_1q_2)}_{\tM_\alpha}\right),
\end{equation}

\end{Definition}

When $\beta=\frac{1}{2}$, we write for short $D^{\JS^+}_{\tM_\alpha}(q_1,q_2)$,
and furthermore when $\alpha=\frac{1}{2}$, we simplify the notation to $D^{\JS^+}_{\tM}(q_1,q_2)$.

When $D^+=\KL^+$, we obtain the extended geometric Jensen-Shannon divergence, $\JS_{\tG}^+(q_1,q_2)=\KL^{\JS^+}_{\tG}(q_1,q_2)$:

\begin{Definition}[Extended G-JSD]\label{def:eGJSD}
The extended geometric Jensen--Shannon divergence between two positive densities $q_1$ and $q_2$ is
\begin{eqnarray}
\JS_{\tG}^+(q_1,q_2) &=& \frac{1}{2}\,\left( \KL^+(q_1,(q_1q_2)_{\tG})+ \KL^+(q_2,(q_1q_2)_{\tG}))\right),\label{eq:eGJSD}
\end{eqnarray}
\end{Definition}

The extended G-JSD between two normalized densities $p_1$ and $p_2$ is thus
\begin{eqnarray}
\JS_{\tG}^+(p_1,p_2) &=& \frac{1}{2}\left(\int \left(p_1\log\frac{p_1}{\sqrt{p_1\,p_2}} +p_2\log\frac{p_2}{\sqrt{p_1\,p_2}} \right)\,\dmu +\int \sqrt{p_1\,p_2}\, \dmu)\right) -1,\\
&=& \frac{1}{2} \left(
\int \left(p_1\log \sqrt{\frac{p_1}{p_2}}  +p_2\log \sqrt{\frac{p_2}{{p_1}}} 
\right)\,\dmu
+Z_G(p_1,p_2)
\right) -1,
\end{eqnarray}
with $Z_G(p_1,p_2)=\exp(-B(p_1,p_2))$.

Thus we get the following propositions:

\begin{Proposition}\label{prop:eGJSDid}
The extended geometric Jensen--Shannon divergence (G-JSD) can be expressed as follows:
$$
\JS_{\tG}^+(p_1,p_2) = \frac{1}{4}\, J(p_1,p_2)+\exp(-B(p_1,p_2))-1.
$$
\end{Proposition}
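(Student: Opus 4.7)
The plan is to expand the definition of the extended G-JSD directly and then collect terms. Each summand $\KL^+(p_i,(p_1p_2)_{\tG})$ breaks naturally into three pieces: a log-ratio integral, a $\sqrt{p_1p_2}$ integral, and a $-p_i$ integral. The first piece will produce half of one direction of the Kullback--Leibler divergence (because the geometric mean contributes a square root inside the log), the second piece will produce the Bhattacharyya coefficient $Z_G(p_1,p_2)=\exp(-B(p_1,p_2))$, and the third piece will contribute $-1$ since $p_i$ is a probability density. Averaging the two summands then gives $\tfrac14 J(p_1,p_2)$ from the log-ratio pieces, $Z_G(p_1,p_2)$ from the Bhattacharyya pieces (each appearing in both summands, halved by the $\tfrac12$ factor), and $-1$ from the normalization pieces.

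Concretely, I would first write
\begin{equation*}
\KL^+(p_i,\sqrt{p_1p_2}) = \int p_i\log\frac{p_i}{\sqrt{p_1p_2}}\,\dmu + \int\sqrt{p_1p_2}\,\dmu - \int p_i\,\dmu,
\end{equation*}
and simplify the log term using $\log(p_i/\sqrt{p_1p_2}) = \tfrac12\log(p_i/p_j)$ for $j\neq i$, so that $\int p_i\log(p_i/\sqrt{p_1p_2})\,\dmu = \tfrac12 \KL(p_i,p_j)$. Using $\int p_i\,\dmu=1$ and $\int\sqrt{p_1p_2}\,\dmu=Z_G(p_1,p_2)$ gives
\begin{equation*}
\KL^+(p_i,\sqrt{p_1p_2}) = \tfrac12\KL(p_i,p_j) + Z_G(p_1,p_2) - 1.
\end{equation*}

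Then averaging the $i=1$ and $i=2$ cases with weight $\tfrac12$ as in Definition~\ref{def:eGJSD} yields
\begin{equation*}
\JS_{\tG}^+(p_1,p_2) = \tfrac14\bigl(\KL(p_1,p_2)+\KL(p_2,p_1)\bigr) + Z_G(p_1,p_2) - 1 = \tfrac14 J(p_1,p_2) + \exp(-B(p_1,p_2)) - 1,
\end{equation*}
using the definition of Jeffreys' divergence and $Z_G=\exp(-B)$.

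There is no real obstacle: the argument is a direct bookkeeping calculation. The only step where one must be careful is noticing that the extended KLD carries the correction $+q_2-q_1$ which, for probability densities $p_1,p_2$, integrates to $Z_G(p_1,p_2)-1$ rather than vanishing as it would for two densities with the same total mass. This is precisely what produces the $\exp(-B(p_1,p_2))-1$ gap between the extended G-JSD and the quantity $\tfrac14 J(p_1,p_2)$, matching the analogous expression $\JS_G(p_1,p_2)=\tfrac14 J(p_1,p_2)-B(p_1,p_2)$ established earlier for the normalized G-JSD.
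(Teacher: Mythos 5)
Your proposal is correct and follows essentially the same route as the paper's proof: expand $\KL^+(p_i,\sqrt{p_1p_2})$, use $\log(p_i/\sqrt{p_1p_2})=\tfrac12\log(p_i/p_j)$ to extract the half-KLD terms, and collect the $\sqrt{p_1p_2}$ and $-p_i$ contributions into $Z_G(p_1,p_2)-1$. The bookkeeping is identical to the paper's; your per-summand organization is just a slightly more explicit presentation of the same calculation.
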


\begin{proof}
We have
\begin{eqnarray*}
\JS_{\tG}^+(p_1,p_2) &=& \frac{1}{2}\,\left(
\KL^+(p_1,(p_1p_2)_{\tG}) + \KL^+(p_2,(p_1p_2)_{\tG}) 
\right),\\
&=& 
\frac{1}{2}\,\left(\int \left(
p_1\log \sqrt{\frac{p_1}{p_2}} + p_2\log \sqrt{\frac{p_2}{p_1}} + 2\sqrt{p_1\, p_2}- (p_1+p_2) \right)\,\dmu 
\right),\\
&=& \int \frac{1}{4}\, (p_1-p_2)\log\frac{p_1}{p_2}\, \dmu + \int \sqrt{p_1\, p_2}\,\dmu -1,\\
&=& \frac{1}{4}\, J(p_1,p_2) + \exp(-B(p_1,p_2)) -1.
\end{eqnarray*}
\end{proof}

Thus we can express the gap between $\JS_{\tG}^+(p_1,p_2)$ and $\JS_G(p_1,p_2)$:
$$
\Delta_G(p_1,p_2)=\JS_{\tG}^+(p_1,p_2)-\JS_G(p_1,p_2)=\exp(-B(p_1,p_2))+B(p_1,p_2)-1.
$$ 

Since $Z_G(p_1,p_2)=\exp(-B(p_1,p_2))$, we have:
$$
\Delta_G(p_1,p_2)=Z_G(p_1,p_2)-\log Z_G(p_1,p_2)-1.
$$

\begin{Proposition}\label{prop:eGJSDfdiv}
The extended G-JSD is a $f$-divergence for the generator 
$$
f_{\tG}(u)=\frac{1}{4}\left(u-1\right)\log u+\sqrt{u}-1.
$$
That is, we have $\JS_{\tG}^+(p_1,p_2) = I_{f_{\tG}}(p_1,p_2)$.
\end{Proposition}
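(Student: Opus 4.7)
The plan is to start from the closed-form identity already established in Proposition~\ref{prop:eGJSDid}, namely
$$
\JS_{\tG}^+(p_1,p_2) = \tfrac{1}{4} J(p_1,p_2) + \exp(-B(p_1,p_2)) - 1,
$$
and to rewrite each of the three summands as an integral of the form $\int p_1\, g(p_2/p_1)\,\dmu$ for a suitable scalar function $g$. Since the sum of $f$-coefficients associated with generators $g_i$ is the $f$-coefficient associated with $\sum g_i$, collecting the three generators will yield $f_{\tG}$.

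First, I would observe that with the substitution $u=p_2/p_1$, the integrand of Jeffreys' divergence satisfies $(p_1-p_2)\log(p_1/p_2)=p_1(1-u)(-\log u)=p_1(u-1)\log u$, so $J(p_1,p_2)=\int p_1\,(u-1)\log u\,\dmu$, i.e., $J$ is an $f$-divergence with generator $u\mapsto (u-1)\log u$. Next, the Bhattacharyya coefficient gives $\exp(-B(p_1,p_2))=\int\sqrt{p_1 p_2}\,\dmu=\int p_1\,\sqrt{u}\,\dmu$, so it is the $f$-coefficient of $u\mapsto\sqrt{u}$. Finally, the constant $-1$ is represented as $-\int p_1\,\dmu=\int p_1\cdot(-1)\,\dmu$. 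Adding these three contributions yields
$$
\JS_{\tG}^+(p_1,p_2) = \int p_1 \left[\tfrac{1}{4}(u-1)\log u + \sqrt{u} - 1 \right]\dmu = I_{f_{\tG}}(p_1,p_2),
$$
with $f_{\tG}(u)=\tfrac14(u-1)\log u+\sqrt{u}-1$, as claimed.

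To confirm $f_{\tG}$ is a legitimate $f$-divergence generator, I would check the normalization $f_{\tG}(1)=0+1-1=0$ and verify strict convexity. Computing the second derivative gives
$$
f_{\tG}''(u) = \frac{1}{4u}+\frac{1}{4u^2}-\frac{1}{4u^{3/2}} = \frac{1}{4u^2}\left(u-\sqrt{u}+1\right),
$$
and setting $v=\sqrt{u}>0$ one sees $v^2-v+1>0$ (its discriminant is $-3$), so $f_{\tG}''(u)>0$ for every $u>0$. In particular, $f_{\tG}$ is strictly convex at $1$.

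The derivation is mostly bookkeeping — the only mildly delicate point is the convexity verification, where the algebraic sign of $u-\sqrt{u}+1$ must be handled cleanly via the quadratic in $\sqrt{u}$; everything else follows directly from Proposition~\ref{prop:eGJSDid} and the $f$-divergence representations of the Jeffreys divergence and the Bhattacharyya coefficient already recorded earlier in the paper.
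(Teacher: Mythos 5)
Your proposal is correct and follows essentially the same route as the paper: it starts from the identity $\JS_{\tG}^+=\frac14 J+\BC-1$ of Proposition~\ref{prop:eGJSDid}, reads off the generators $(u-1)\log u$, $\sqrt{u}$, and $-1$ for the three summands, and verifies $f_{\tG}(1)=0$ together with positivity of $f_{\tG}''$ via the quadratic $v^2-v+1$ in $v=\sqrt{u}$ (the paper writes $f_{\tG}''(u)=\frac{\sqrt{u}(u+1)-u}{4u^{5/2}}$, which is the same expression as your $\frac{1}{4u^{2}}\left(u-\sqrt{u}+1\right)$). No gaps.
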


\begin{proof}
We proved that $\JS_{\tG}^+(p_1,p_2) = \frac{1}{4}\, J(p_1,p_2)+\BC(p_1,p_2)-1$.
The Jeffreys divergence is a $f$-divergence for the generator $f_J(u)=(u-1)\log u$,
and the Bhattacharyya coefficient is a $f$-coefficient for $f_\BC(u)=\sqrt{u}$ (a ``$f$-divergence'' for a concave generator).
Thus we have
$$
f_{\tG}(u)=\frac{1}{4}\left(u-1\right)\log u+\sqrt{u}-1,
$$ 
such that $\JS_{\tG}^+(p_1,p_2) = I_{f_{\tG}}(p_1,p_2)$.
We check that $f_{\tG}(u)$ is convex since $f_{\tG}''(u)=\frac{\sqrt{u}(u+1)-u}{4u^{\frac{5}{2}}}$ (and by a change of variable $t=\sqrt{u}$, the numerator $t(t^2-t+1)$ is shown positive since the discriminant of $t^2-t+1$ is negative), and we have $f_{\tG}(1)=0$.
Thus the extended G-JSD is a proper $f$-divergence.
\end{proof}

It follows that the extended G-JSD satisfies the information monotonicity of invariant divergences in information geometry~\cite{IG-2016}.

{
\begin{Remark}
More generally, let us define the extended $(\alpha,\beta)$-GJSD for $\alpha\in(0,1),\beta\in(0,1)$ as
$$
\JS_{G_\alpha,\beta}(p_1,p_2)=\int \left(
\beta\,p_1\log\frac{p_1}{p_1^\alpha\, p_2^{1-\alpha}} + (1-\beta)\,p_2\log\frac{p_2}{p_1^\alpha\, p_2^{1-\alpha}}
 + p_1^\alpha\, p_2^{1-\alpha} - (\beta p_1^\alpha +(1-\beta) p_2^{1-\alpha})
\right)\, \dmu.
$$

Then we get  the following identity:
\begin{eqnarray*}
\JS_{G_\alpha,\beta}(p_1,p_2) &=& \beta(1-\alpha) \KL(p_1,p_2)+(1-\beta)\alpha \KL(p_2,p_1)+\BC_\alpha(p_1,p_2)-1.
\end{eqnarray*}

Furthermore, divergence $\JS_{G_\alpha,\beta}$ is expressed using the $f$-divergence formula for the following  generator:
$$
f_{\alpha,\beta}(u)=-(1-\alpha)\beta\log(u)+\alpha(1-\beta)u\log(u)+u^{1-\alpha}-(\beta+(1-\beta)u).
$$

Let $\alpha=\beta$. Then we have
$$
f_{\alpha,\alpha}''(u)= \frac{\alpha(1-\alpha)}{u^{2+\alpha}}\, (u^\alpha(u+1)+u)
>0, \forall u>0,\forall \alpha\in (0,1).
$$


Hence $\JS_{G_\alpha,\beta}(p_1,p_2)=I_{f_{\alpha,\alpha}}(p_1,p_2)\geq 0$, i.e., 
the extended $(\alpha,\alpha)$-GJSD is a $f$-divergence since $f_{\alpha,\alpha}(u)$ is strictly convex
 and we have $f_{\alpha,\alpha}(1)=0$.
\end{Remark}
}

By abuse of notations, we have
$$
{\KL^+}(q_1,q_2)\eqdef \KL(q_1,q_2) + \int \left(q_2-q_1\right)\,\dmu,
$$
although $q_1$ and $q_2$ may not need to be normalized in the $\KL$ term (which can then yield a potentially negative value).
Letting $Z(q_i)\eqdef \int q_i\,\dmu$ be the total mass of positive density $q_i$, we have 
\begin{equation}
{\KL^+}(q_1,q_2) = \KL(q_1,q_2) + Z(q_2)-Z(q_1).
\end{equation}

Let $\tm_\alpha=M_\alpha(q_1,q_2)$ be the unnormalized $M$-mixture of positive densities $q_1$ and $q_2$, and set $Z_{M_\alpha}=\int \tm_\alpha\,\dmu$ be the normalization term so that we have $m_\alpha=\frac{\tm_\alpha}{Z_{M_\alpha}}$ and $\tm_\alpha=Z_{M_\alpha}\, m_\alpha$. 
When clear from context, we write $Z_\alpha$ instead of $Z_{M_\alpha}$.

We get after elementary calculus the following identity:
\begin{equation}
\JS^+_{\tilde{M}_\alpha,\beta}(q_1,q_2) = \JS_{{{M}_\alpha,\beta}}(q_1,q_2)
- (\beta Z(q_1)+(1-\beta)Z(q_2))\log Z_\alpha + Z_\alpha-(\beta Z(q_1)+(1-\beta)Z(q_2)).
\end{equation}

Therefore the difference gap $\Delta_{M_\alpha,\beta}(p_1,p_2)$ (written for short as $\Delta(p_1,p_2)$) between the normalized JSD and the unnormalized M-JSD between two normalized densities $p_1$ and $p_2$ (i.e., with $Z_1=Z(p_1)=1$ and 
$Z_2=Z(p_2)=1$) is
\begin{equation}\label{eq:gap}
\Delta(p_1,p_2) \eqdef {\JS^+}_{\tilde{M}_\alpha,\beta}(p_1,p_2)  - \JS^{{M}_\alpha,\beta}(p_1,p_2) 
 =  Z_\alpha-\log(Z_\alpha)  -1. 
\end{equation}

\begin{Proposition}[Extended/normalized M-JSD Gap]\label{prop:gap}
The following identity holds: 
$$
{\JS^+}_{\tilde{M}_\alpha,\beta}(p_1,p_2)=\JS_{{M}_\alpha,\beta}(p_1,p_2)+Z_\alpha-\log(Z_\alpha)  -1.
$$
\end{Proposition}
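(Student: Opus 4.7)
The plan is to carry out the elementary bookkeeping calculation that the paper flags just before the proposition. Write $\tm_\alpha \eqdef M_\alpha(p_1,p_2)$ for the unnormalized $M$-mixture and $m_\alpha = \tm_\alpha / Z_\alpha$ for the normalized one, with $Z_\alpha \eqdef \int \tm_\alpha\,\dmu$. The target is to move between $\KL^+(p_i,\tm_\alpha)$ and $\KL(p_i,m_\alpha)$ and then average.

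First, I would unfold the definition of $\JS^+_{\tM_\alpha,\beta}(p_1,p_2)$ and apply the identity
$\KL^+(q_1,q_2) = \KL(q_1,q_2) + Z(q_2) - Z(q_1)$
to each of the two summands, obtaining two pieces of the form $\KL(p_i,\tm_\alpha) + Z_\alpha - 1$ (using that $p_1,p_2$ are probability densities, so $Z(p_1)=Z(p_2)=1$).

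Second, I would use the homogeneity of the logarithm to split off $Z_\alpha$ from the denominator inside $\KL(p_i,\tm_\alpha)$:
\begin{equation*}
\KL(p_i,\tm_\alpha)
= \int p_i \log \frac{p_i}{Z_\alpha\, m_\alpha}\,\dmu
= \KL(p_i, m_\alpha) - \log Z_\alpha \int p_i\,\dmu
= \KL(p_i, m_\alpha) - \log Z_\alpha,
\end{equation*}
again using $\int p_i\,\dmu = 1$.

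Third, I would take the $\beta$/$(1-\beta)$ weighted combination of these two identities. The $\KL(p_i, m_\alpha)$ terms assemble exactly into $\JS_{M_\alpha,\beta}(p_1,p_2)$ by definition, while the constant terms contribute $-(\beta + (1-\beta))\log Z_\alpha + Z_\alpha - (\beta + (1-\beta)) = Z_\alpha - \log Z_\alpha - 1$, which is the announced gap. This matches the general formula displayed in Eq.~\ref{eq:gap} when the marginal masses are set to one, and is independent of $\beta$ precisely because the convex weights $\beta, 1-\beta$ sum to one against $Z(p_1)=Z(p_2)=1$.

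There is no real obstacle here — the only thing worth being careful about is that the $\log Z_\alpha$ factor comes out of the integral multiplied by $\int p_i\,\dmu$ (not by $\int m_\alpha\,\dmu$), so the normalization used is that of $p_1,p_2$; this is what forces the assumption that $p_1, p_2$ are probability densities and produces the clean coefficient $1$ in front of $\log Z_\alpha$. For general positive measures $q_1,q_2$ one would instead recover the longer identity stated just above the proposition.
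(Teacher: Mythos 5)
Your calculation is correct and is exactly the "elementary calculus" the paper performs: it derives the general identity $\JS^+_{\tM_\alpha,\beta}(q_1,q_2)=\JS_{M_\alpha,\beta}(q_1,q_2)-(\beta Z(q_1)+(1-\beta)Z(q_2))\log Z_\alpha+Z_\alpha-(\beta Z(q_1)+(1-\beta)Z(q_2))$ by the same two steps (the decomposition $\KL^+=\KL+Z(q_2)-Z(q_1)$ and pulling $\log Z_\alpha$ out against the mass of the first argument), then sets $Z(p_1)=Z(p_2)=1$. Your closing remark about which normalization multiplies $\log Z_\alpha$ matches the paper's general formula precisely.
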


Thus ${\JS^+}_{\tilde{M}_\alpha,\beta}(p_1,p_2)\geq \JS_{{M}_\alpha,\beta}(p_1,p_2)$ when $\Delta(p_1,p_2)\geq 0$
and ${\JS^+}_{\tilde{M}_\alpha,\beta}(p_1,p_2)\leq \JS_{{M}_\alpha,\beta}(p_1,p_2)$ when $\Delta(p_1,p_2)\leq 0$.
 
When we consider the weighted arithmetic mean $A_\alpha$, we always have $Z_\alpha=1$ for $\alpha\in(0,1)$, and thus the two definitions (Definition~\ref{def:MJSD} and Definition~\ref{def:eMJSD}) of the $A$-JSD coincide (i.e., $Z_\alpha^A-\log(Z_\alpha^A)  -1=0$): 
$$
\JS_A(p_1,p_2)={\JS}_{\tilde{A}}(p_1,p_2).
$$
However, when the weighted mean $M_\alpha$ differs from the weighted arithmetic mean (i.e., $M_\alpha\not=A_\alpha$), the two definitions of the M-JSD 
$\JS_M$ and extended M-JSD ${\JS}_{\tilde{M}}$ differ by the gap expressed in Eq.~\ref{eq:gap}.

\begin{Remark}\label{rk:logbase}
When information is measured in bits, logarithms are taken to base $2$ and when information is measured in nats, base $e$ is considered.
Thus we shall generally consider  the gap $\Delta_b=Z_\alpha-\log_b(Z_\alpha)  -1$ where $b$ denotes the base of the logarithm.
When $b=e$, we have $\Delta_e\geq 0$ for all $Z_\alpha>0$.
When $b=2$, we have $\Delta_2=Z_\alpha-\log_2(Z_\alpha)  -1\geq 0$ when $0<Z_\alpha\leq 1$ or $Z_\alpha\geq 2$.
But since $Z_\alpha\leq 2$ (see Eq.~\ref{eq:zub}), the condition simplifies to $\Delta_2\geq 0$ if and only if $Z_\alpha\leq 1$. 
\end{Remark}

\begin{Remark}\label{rk:gjsdnotmetric}
Although $\sqrt{\JS}$ is a metric distance~\cite{fuglede2004jensen}, $\sqrt{\JS_G}$ is not a metric distance as the triangle inequality is not satisfied.
It suffices to report a counterexample of the triangle inequality for a triple of points $p_1$, $p_2$, and $p_3$:
Consider $p_1=(0.55,0.45)$, $p_2=(0.002,0.998)$, and $p_3=(0.045,0.955)$.
Then we have
$\sqrt{\JS_G}(p_1,p_2)\approx 1.0263227\ldots$,  $\sqrt{\JS_G}(p_1,p_3)\approx 0.63852342\ldots$, and 
$\sqrt{\JS_G}(p_3,p_2)\approx 0.19794622\ldots$.
The triangle inequality fails with an error of 
$$
\sqrt{\JS_G}(p_1,p_2)-(\sqrt{\JS_G}(p_1,p_3)+\sqrt{\JS_G}(p_3,p_2))\approx 0.1898531\dots.
$$

Similarly, the  triangle inequality also fails for the extended G-JSD: We have  
$\sqrt{\JS^+_G(p_1,p_2)}\approx 1.0788275\dots$, $\sqrt{\JS^+_G(p_1,p_3)} \approx  0.6691922\ldots$, and 
$\sqrt{\JS^+_G(p_3,p_2)}\approx 0.1984633\ldots$
with a triangle inequality defect value of 
$$
\sqrt{\JS^+_G(p_1,p_2)}-(\sqrt{\JS^+_G(p_1,p_3)}+\sqrt{\JS^+_G(p_3,p_2)})\approx 0.2111719\ldots.
$$
\end{Remark}

\subsection{Power JSDs and (extended) min-JSD and max-JSD}\label{sec:limits}

Let $P_{\gamma,\alpha}(a,b)\eqdef(\alpha a^\gamma+(1-\alpha)b^\gamma)^{\frac{1}{\gamma}}$ be the $\gamma$-power mean for $\gamma\not=0$ (with $A_\alpha=P_{1,\alpha}$).
Further define $P_{0,\alpha}(a,b)=G_\alpha(a,b)$ so that $P_{\gamma,\alpha}$ defines the weighted power means for  $\gamma\in\bbR$ and $\alpha\in (0,1)$ in the reminder.
Since $P_{\gamma,\alpha}(a,b)\leq P_{\gamma',\alpha}(a,b)$ when $\gamma'\geq\gamma$ for any $a,b>0$, we have that
\begin{equation}
Z_\alpha^{P_\gamma}(p_1,p_2)=\int P_{\gamma,\alpha}(p_1(x),p_2(x))\,\dmu \leq Z_\alpha^{P_{\gamma'}}(p_1,p_2)=\int P_{\gamma',\alpha}(p_1(x),p_2(x))\,\dmu.
\end{equation}

Let $P_\gamma(a,b)=P_{\gamma,\frac{1}{2}}(a,b)$.
We have $\lim_{\gamma\rightarrow -\infty} P_\gamma(a,b)=\min(a,b)$ and $\lim_{\gamma\rightarrow +\infty} P_\gamma(a,b)=\max(a,b)$.
Thus we can define both (extented) $\min$-JSD and (extented) $\max$-JSD.
Using the fact that $\min(a,b)=\frac{a+b}{2}-\frac{1}{2}|a-b|$ and  $\max(a,b)=\frac{a+b}{2}+\frac{1}{2}|a-b|$, we obtain the extremal mixture normalization terms as follows:
\begin{eqnarray}
Z_{\min}(p_1,p_2) &=& \int \min(p_1,p_2) \,\dmu = 1 - \TV(p_1,p_2),\\
Z_{\max}(p_1,p_2) &=& \int \max(p_1,p_2) \,\dmu = 1 + \TV(p_1,p_2), 
\end{eqnarray}
where $\TV(p_1,p_2)=\frac{1}{2}\int |p_1-p_2| \,\dmu$ is the total variation distance.

\begin{Proposition}[$\max$-JSD]
The following upper bound holds for $\max$-JSD:
\begin{equation}
0\leq {\JS^+}_{\widetilde{\max}}(p_1,p_2)\leq \TV(p_1,p_2).
\end{equation}
\
Furthermore, the following identity relates the two types of $\max$-JSDs: 
\begin{equation}
{\JS^+}_{\widetilde{\max}}(p_1,p_2)={\JS}_{\widetilde{\max}}(p_1,p_2)+ \TV(p_1,p_2) -\log \left(1+\TV(p_1,p_2)\right).
\end{equation}
\end{Proposition}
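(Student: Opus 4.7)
My plan is to handle the two assertions separately, with the identity following essentially for free from the general gap formula in Proposition~\ref{prop:gap} and the upper bound requiring a short direct calculation. For the identity, I would invoke Proposition~\ref{prop:gap} with $M=\max$ and $\alpha=\beta=\frac{1}{2}$. Its derivation used only the finiteness of the mixture normalizer, and here $Z_{\max}(p_1,p_2)=1+\TV(p_1,p_2)\in[1,2]$ has already been computed in the discussion leading to the proposition. Substituting into $Z_{\max}-\log Z_{\max}-1$ simplifies to $\TV(p_1,p_2)-\log(1+\TV(p_1,p_2))$, which yields the stated identity at once.

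For the upper bound I would work directly from the definition $\JS^+_{\widetilde{\max}}(p_1,p_2)=\frac{1}{2}(\KL^+(p_1,\tm)+\KL^+(p_2,\tm))$ with $\tm=\max(p_1,p_2)$. Splitting the sample space into $A=\{p_1\geq p_2\}$ and its complement $A^c$, on $A$ we have $\tm=p_1$ so the $\KL^+(p_1,\tm)$ integrand vanishes pointwise, and symmetrically the $\KL^+(p_2,\tm)$ integrand vanishes on $A^c$. Reassembling the surviving contributions over the full space yields
\begin{equation*}
\KL^+(p_1,\tm)+\KL^+(p_2,\tm) = \int\left(\min(p_1,p_2)\log\frac{\min(p_1,p_2)}{\max(p_1,p_2)} + \max(p_1,p_2)-\min(p_1,p_2)\right)\dmu.
\end{equation*}
The first term is pointwise non-positive since $\min/\max\leq 1$ and can be discarded for an upper bound, while the remaining integrand equals $|p_1-p_2|$ and integrates to $2\TV(p_1,p_2)$. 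Halving gives $\JS^+_{\widetilde{\max}}(p_1,p_2)\leq \TV(p_1,p_2)$.

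The lower bound $\JS^+_{\widetilde{\max}}(p_1,p_2)\geq 0$ is immediate from the non-negativity of $\KL^+$, which follows from its pointwise interpretation as a Bregman divergence for the generator $u\log u - u$. I do not foresee a genuine obstacle; the only minor subtlety is the convention $0\log 0=0$, needed at points where $\min(p_1,p_2)=0$ but $\max(p_1,p_2)>0$, which keeps the integrand pointwise well-defined. As a sanity check, when $p_1$ and $p_2$ have disjoint supports we get $\tm=p_1+p_2$ and each $\KL^+(p_i,\tm)$ reduces to $\int p_{3-i}\,\dmu = 1$, so $\JS^+_{\widetilde{\max}}=1=\TV(p_1,p_2)$, confirming the bound is tight at this extremal configuration.
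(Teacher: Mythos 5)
Your proof is correct and follows essentially the same route as the paper: the identity is obtained from the general gap formula with $Z_{\max}(p_1,p_2)=1+\TV(p_1,p_2)$, and the upper bound is obtained by discarding the non-positive logarithmic terms and using $2\max(p_1,p_2)-(p_1+p_2)=|p_1-p_2|$. Your partition into $\{p_1\geq p_2\}$ and its complement is a mild refinement that additionally yields the exact identity $\JS^+_{\widetilde{\max}}(p_1,p_2)=\TV(p_1,p_2)-\frac{1}{2}\int\min(p_1,p_2)\log\frac{\max(p_1,p_2)}{\min(p_1,p_2)}\,\dmu$, but the inequality ultimately rests on the same two observations as in the paper.
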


\begin{proof}
We have
$$
{\JS^+}_{\widetilde{\max}}(p_1,p_2) \eqdef 
\frac{1}{2}\, \int \left(p_1\log\frac{p_1}{\max(p_1,p_2)} 
  + p_2\log\frac{p_2}{\max(p_1,p_2)} + 2\max(p_1,p_2)-(p_1+p_2)\right)\, \dmu. 
$$

Since both $\log\frac{p_1}{\max(p_1,p_2)}\leq 0$ and $\log\frac{p_2}{\max(p_1,p_2)}\leq 0$, and $\max(a,b)=\frac{a+b}{2}+\frac{1}{2}|b-a|$, we have
$$
{\JS^+}_{\widetilde{\max}}(p_1,p_2) \leq  \int \left( \frac{p_1+p_2}{2}+\frac{1}{2}|p_2-p_1| - \frac{p_1+p_2}{2}\right)\, \dmu.
$$
That is, ${\JS^+}_{\widetilde{\max}}(p_1,p_2) \leq   \TV(p_1,p_2)$.

We characterize the gap as follows: 
\begin{eqnarray*}
\Delta_{\max}(p_1,p_2) &=& Z_{\max}(p_1,p_2)-\log Z_{\max}(p_1,p_2)-1,\\
 &=&   \TV(p_1,p_2) -\log (1+\TV(p_1,p_2)) \geq 0,
\end{eqnarray*}
since $0\leq \TV\leq 1$.
Thus ${\JS^+}_{\widetilde{\max}}(p_1,p_2) \geq {\JS}_{{\max}}(p_1,p_2)$.
\end{proof}

\begin{Proposition}[min-JSD]\label{prop:minJSD}
We have the following lower bound on the extended min-JSD:
$$
{\JS^+}_{\widetilde{\min}}(p_1,p_2)\geq  \frac{1}{4} \, J(p_1,p_2)- \TV(p_1,p_2),
$$
where $J(p_1,p_2)\eqdef \KL(p_1,p_2)+\KL(p_2,p_1) =\int (p_1-p_2)\log\frac{p_1}{p_2}\,\dmu$ is Jeffreys' divergence~\cite{jeffreys1998theory}
and
$$
{\JS^+}_{\widetilde{\min}}(p_1,p_2) = {\JS}_{{\min}}(p_1,p_2) -\TV(p_1,p_2)+\log (1-\TV(p_1,p_2)).
$$
\end{Proposition}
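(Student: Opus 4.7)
The plan is to follow the template of the preceding \emph{max-JSD} proposition, since $\min$ and $\max$ are dual power-mean limits whose mixture normalizers $Z_{\min}(p_1,p_2)=1-\TV(p_1,p_2)$ and $Z_{\max}(p_1,p_2)=1+\TV(p_1,p_2)$ mirror one another. First, I would unpack $\JS^+_{\widetilde{\min}}(p_1,p_2)$ via Definition~\ref{def:eMJSD} with $D^+=\KL^+$, $M_\alpha=\min$ and $\beta=\tfrac12$, obtaining
\begin{equation*}
\JS^+_{\widetilde{\min}}(p_1,p_2)=\tfrac12\int\left(p_1\log\tfrac{p_1}{\min(p_1,p_2)}+p_2\log\tfrac{p_2}{\min(p_1,p_2)}+2\min(p_1,p_2)-(p_1+p_2)\right)\dmu.
\end{equation*}

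For the lower bound I would exploit the pointwise inequality $\min(a,b)\leq \sqrt{ab}$ between the minimum and the geometric mean. Taking logarithms yields $-\log\min(p_1,p_2)\geq -\tfrac12(\log p_1+\log p_2)$, so that pointwise
\begin{equation*}
p_1\log\tfrac{p_1}{\min(p_1,p_2)}\geq \tfrac12\, p_1\log\tfrac{p_1}{p_2},\qquad p_2\log\tfrac{p_2}{\min(p_1,p_2)}\geq \tfrac12\, p_2\log\tfrac{p_2}{p_1}.
\end{equation*}
Summing and integrating produces $\tfrac12\,J(p_1,p_2)$. For the remaining affine piece, I would use $\int\min(p_1,p_2)\,\dmu=1-\TV(p_1,p_2)$ to get $\int[2\min(p_1,p_2)-(p_1+p_2)]\,\dmu=-2\TV(p_1,p_2)$. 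Multiplying the sum by the outer $\tfrac12$ prefactor delivers the lower bound $\tfrac14\,J(p_1,p_2)-\TV(p_1,p_2)$.

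For the identity relating $\JS^+_{\widetilde{\min}}$ to $\JS_{\min}$, I would invoke the general gap formula of Proposition~\ref{prop:gap} specialized to $M_\alpha=\min$ and $\beta=\tfrac12$, substituting $Z_{\min}(p_1,p_2)=1-\TV(p_1,p_2)$. This immediately rewrites $\JS^+_{\widetilde{\min}}$ in closed form in terms of $\JS_{\min}$, $\TV$, and $\log(1-\TV)$, matching the claimed expression.

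The main obstacle I expect is identifying the right inequality to extract the Jeffreys term. The choice $\min\leq \sqrt{p_1p_2}$ is natural but not forced, and the payoff is that taking its log produces exactly one-half of the symmetric KL density on each side, which is precisely what assembles into $\tfrac14\,J$ after combining with the outer $\tfrac12$. A secondary point of care is sign bookkeeping in the identity: unlike the $\max$ case, $Z_{\min}\le 1$, so $\log Z_{\min}\le 0$ and one must track the gap $Z_{\min}-\log Z_{\min}-1$ faithfully when specializing Proposition~\ref{prop:gap}.
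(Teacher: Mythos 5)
Your proof follows essentially the same route as the paper: the lower bound via the pointwise inequality $\min(p_1,p_2)\leq\sqrt{p_1p_2}$ together with $\int\min(p_1,p_2)\,\dmu=1-\TV(p_1,p_2)$ for the affine piece, and the identity via the gap formula of Proposition~\ref{prop:gap} specialized to $Z_{\min}=1-\TV(p_1,p_2)$. One caveat: substituting $Z_{\min}=1-\TV$ into $Z-\log Z-1$ yields the gap $-\TV(p_1,p_2)-\log\left(1-\TV(p_1,p_2)\right)$, i.e.\ a \emph{minus} sign on the logarithm, so your assertion that this ``matches the claimed expression'' is not literally correct --- the proposition as stated carries a sign typo on the $\log(1-\TV)$ term (the paper's own proof likewise computes $\Delta_{\min}=-\TV-\log(1-\TV)\geq 0$), and the identity you actually obtain should read ${\JS^+}_{\widetilde{\min}}(p_1,p_2)={\JS}_{\min}(p_1,p_2)-\TV(p_1,p_2)-\log\left(1-\TV(p_1,p_2)\right)$.
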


\begin{proof}
We have  $Z_{\min}(p_1,p_2)=\int \min\{p_1,p_2\}\,\dmu = 1-\TV(p_1,p_2)\leq 1$ and
\begin{eqnarray*}
\Delta_{\min}(p_1,p_2) &=& Z_{\min}(p_1,p_2)-\log Z_{\min}(p_1,p_2)-1,\\
 &=&   -\TV(p_1,p_2) -\log (1-\TV(p_1,p_2)) \geq 0,
\end{eqnarray*}
since $-x-\log(1-x)\geq 0$ for $x\leq 1$.
Note that the gap can be arbitrarily large when $\TV(p_1,p_2)\rightarrow 1^-$.

Thus we have ${\JS^+}_{\widetilde{\min}}(p_1,p_2)\geq {\JS}_{{\min}}(p_1,p_2)$.

To get the lower bound, we use the fact that $\min(p_1,p_2)\leq \sqrt{p_1p_2}$.
Indeed, we have
\begin{eqnarray*}
{\JS^+}_{\widetilde{\min}}(p_1,p_2) &=& \frac{1}{2} \left(\int (p_1\log \frac{p_1}{\min(p_1,p_2)}+ p_2\log \frac{p_2}{\min(p_1,p_2)} 
 +2 \min(p_1,p_2) - (p_1+p_2) \right)\, \dmu,\\
&\geq & \frac{1}{2} \int \left(  \frac{1}{2} p_1\log \frac{p_1}{p_2}+ \frac{1}{2} p_2\log \frac{p_2}{p_1} 
 +2 \min(p_1,p_2) - (p_1+p_2) \right)\, \dmu,\\
&= & \frac{1}{4} \, J(p_1,p_2) - \TV(p_1,p_2).
\end{eqnarray*}
\end{proof}

\begin{Remark}
Let us report the total variation distance between two univariate Gaussian distributions $p_{\mu_1,\sigma_1}$ and $p_{\mu_2,\sigma_2}$
 in closed-form using the error function~\cite{nielsen2014generalized}: $\erf(x)=\frac{1}{\sqrt{\pi}} \int_{-x}^x e^{-t^2} \dt$.
\begin{itemize}
\item When $\sigma_1=\sigma_2=\sigma$, we have
\begin{equation}
\TV(p_1,p_2) = \frac{1}{2} \left| \Phi(x^*;\mu_2,\sigma)-  \Phi(x^*;\mu_1,\sigma)\right|,
\end{equation}
where $\Phi(x;\mu,\sigma)=\frac{1}{2}(1+\erf(\frac{x-\mu}{\sigma\sqrt{2}}))$ is the cumulative distribution, and
\begin{equation}
x^*= \frac{\mu_1^2-\mu_2^2}{2(\mu_1-\mu_2)}.
\end{equation}

\item When $\sigma_1\not=\sigma_2$, we let $x_1=\frac{-b-\sqrt{\Delta}}{2a}$ and $x_2=\frac{-b+\sqrt{\Delta}}{2a}$ where 
$\Delta=b^2-4ac\geq 0$ and

\begin{eqnarray}
a&=& \frac{1}{\sigma_1^2}-\frac{1}{\sigma_2^2},\\
b&=&  2\left(\frac{\mu_2}{\sigma_2}-\frac{\mu_1}{\sigma_1}\right),\\
c&=&  \left(\frac{\mu_1}{\sigma_1}\right)^2 - \left(\frac{\mu_2}{\sigma_2}\right)^2 -2\log \frac{\sigma_2}{\sigma_1}.
\end{eqnarray}

The total variation is given by
\begin{eqnarray}\label{eq:unitv}
\lefteqn{\TV( p_1,  p_2) =}\nonumber \\
&& \frac{1}{2} \left( 
\left|
\erf\left(\frac{x_1-\mu_1}{\sigma_1\sqrt{2}}\right)-\erf\left(\frac{x_1-\mu_2}{\sigma_2\sqrt{2}}\right)
\right|
+
\left|
\erf\left(\frac{x_2-\mu_1}{\sigma_1\sqrt{2}}\right)-\erf\left(\frac{x_2-\mu_2}{\sigma_2\sqrt{2}}\right)
\right|
\right)
\end{eqnarray}
\end{itemize}

\end{Remark}

Next, we shall consider the important case of $p_1$ and $p_2$ belonging to the family of multivariate normal distributions, commonly called Gaussian distributions.

\section{Geometric JSDs between Gaussian distributions}\label{sec:gaussian}

\subsection{Exponential families}

The formula for the G-JSD between two Gaussian distributions was reported in~\cite{GJSD-2019} using the more general framework of exponential families.
An exponential family~\cite{barndorff2014information} is a family of probability measures $\{P_\lambda\}$ with Radon-Nikodym densities $p_\lambda$ with respect to $\mu$ expressed canonically as
\begin{eqnarray*}
p_\lambda(x) &\eqdef&\exp\left(\inner{\theta(\lambda)}{t(x)}-F(\theta)+k(x)\right),\\
&=& \frac{1}{Z(\theta)}\,\exp\left(\inner{\theta(\lambda)}{t(x)}+k(x)\right),
\end{eqnarray*}
where $\theta(\lambda)$ is the natural parameter, $t(x)$ the sufficient statistic, $k(x)$ an auxiliary carrier term with respect to $\mu$, and $F(\theta)$ the cumulant function. The partition function $Z(\theta)$ is the normalizer denominator: $Z(\theta)=\exp(F(\theta))$.
The cumulant function  $F(\theta)=\log Z(\theta)$ is strictly convex and analytic~\cite{barndorff2014information}, and the partition function $Z(\theta)=\exp(F(\theta))$ is strictly log-convex (and hence also strictly convex).

We consider the exponential family of multivariate Gaussian distributions 
$$
\calN=\{N(\mu,\Sigma) \st \mu\in\bbR^d, \Sigma\in\PD(d)\},
$$ 
where $\PD(d)$ denotes the set of symmetric positive-definite matrices of size $d\times d$.
Let $\lambda \eqdef(\lambda_v,\lambda_M)=(\mu,\Sigma)$ denote the compound (vector,matrix) parameter of a Gaussian.
The $d$-variate Gaussian density is given by
\begin{eqnarray}\label{eq:mvnl}
p_\lambda(x;\lambda) &\eqdef&  \frac{1}{(2\pi)^{\frac{d}{2}}\sqrt{|\lambda_M|}}  \exp\left(-\frac{1}{2} (x-\lambda_v)^\top \lambda_M^{-1} (x-\lambda_v)\right),
\end{eqnarray} 
where $|\cdot|$ denotes the matrix determinant.
The natural parameters $\theta$ are expressed using both a {vector parameter}  $\theta_v$ and a  {matrix parameter}  $\theta_M$
 in a compound parameter $\theta=(\theta_v,\theta_M)$.
By defining the following {compound inner product} on a compound (vector,matrix) parameter
\begin{equation}
\inner{\theta}{\theta'}\eqdef \theta_v^\top \theta_v'+ \tr\left({\theta_M'}^\top\theta_M\right),
\end{equation}
where $\tr(\cdot)$ denotes the matrix trace, we rewrite the Gaussian density of Eq.~\ref{eq:mvnl} in the canonical form of an exponential family:
\begin{eqnarray}
 p_\theta(x;\theta) &\eqdef& \exp\left(\inner{t(x)}{\theta}-F_\theta(\theta)\right) = p_\lambda(x),
\end{eqnarray} 
where $\theta=\theta(\lambda)$ with
\begin{equation}
\theta=(\theta_v,\theta_M)=\left(\Sigma^{-1}\mu,-\frac{1}{2}\Sigma^{-1}\right)=\theta(\lambda)=\left(\lambda_M^{-1}\lambda_v,-\frac{1}{2}\lambda_M^{-1}\right),
\end{equation}
 is the {compound vector-matrix natural parameter} and 
 \begin{equation}
t(x)=(x,-xx^\top),
\end{equation}
 is the {compound vector-matrix sufficient statistic}. There is no auxiliary carrier term (i.e., $k(x)=0$).
The  function $F_\theta$ is given by:
\begin{equation}
F_\theta(\theta) \eqdef \frac{1}{2}\left( d\log\pi -\log |\theta_M|+\frac{1}{2} \theta_v^\top \theta_M^{-1} \theta_v \right),
\end{equation}

\begin{Remark}
Beware that when the cumulant function is expressed using the ordinary parameter $\lambda=(\mu,\Sigma)$, the cumulant function $F_\theta(\theta(\lambda))$ is not anymore convex:
\begin{eqnarray}
F_\lambda(\lambda) &=& \frac{1}{2}\left(\lambda_v^\top \lambda_M^{-1}\lambda_v+\log |\lambda_M| + d\log2\pi \right),\\
&=& \frac{1}{2}\left(\mu^\top \Sigma^{-1}\mu +\log |\Sigma| + d\log2\pi \right).
\end{eqnarray}
\end{Remark}

We convert between the ordinary parameterization $\lambda=(\mu,\Sigma)$ and the natural parameterization $\theta$ using these formula:
 
\begin{eqnarray*}
\theta=(\theta_v,\theta_M)=\left\{
\begin{array}{ll}
\theta_v(\lambda)=\lambda_M^{-1}\lambda_v=\Sigma^{-1}\mu\\
\theta_M(\lambda)=\frac{1}{2}\lambda_M^{-1}=\frac{1}{2}\Sigma^{-1}
\end{array}
\right.  &\Leftrightarrow &
\lambda=(\lambda_v,\lambda_M)=\left\{
\begin{array}{ll}
\lambda_v(\theta) = \frac{1}{2}\theta_M^{-1}\theta_v=\mu\\
\lambda_M(\theta) = \frac{1}{2}\theta_M^{-1}=\Sigma
\end{array}
\right.\\
\end{eqnarray*}

The geometric mixture $p_{\theta_1}^\alpha p_{\theta_2}^{1-\alpha}$ of two densities of an exponential family is a density $p_{\alpha\theta_1+(1-\alpha)\theta_2}$ of the exponential family with partition function $Z_\alpha(\theta_1,\theta_2)=\exp(-J_{F,\alpha}(\theta_1,\theta_2))$ where $J_{F,\alpha}(\theta_1,\theta_2)$ denotes the skew Jensen divergence~\cite{kailath1967divergence,nielsen2011burbea}:
$$
J_{F,\alpha}(\theta_1,\theta_2) \eqdef \alpha F(\theta_1)+(1-\alpha)F(\theta_2)-F(\alpha\theta_1+(1-\alpha)\theta_2).
$$

Therefore the difference gap of Eq.~\ref{eq:gap} between the G-JSD and the extended G-JSD between exponential family densities is given by:
\begin{eqnarray}
\Delta(\theta_1,\theta_2) &=& \exp({-J_{F,\alpha}(\theta_1,\theta_2)})+J_{F,\alpha}(\theta_1,\theta_2)-1,\\
&=& Z_\alpha(\theta_1,\theta_2)-\log Z_\alpha(\theta_1,\theta_2) -1,\\
&=& Z_\alpha(\theta_1,\theta_2)-F(\alpha\theta_1+(1-\alpha)\theta_2)-1.
\end{eqnarray}

Since $Z_\alpha=\exp(-J_{F,\alpha}(\theta_1,\theta_2))\leq 1$, the gap $\Delta$ is negative, and we have
$$
{\JS^+}_{\tilde{G}_\alpha,\beta}(p_{\mu_1,\Sigma_1},p_{\mu_2,\Sigma_2})  \leq  \JS_{{G}_\alpha,\beta}(p_{\mu_1,\Sigma_1},p_{\mu_2,\Sigma_2}).
$$

\begin{Corollary}\label{cor:gjsdef}
When $p_1=p_{\theta_1}$ and $p_2=p_{\theta_2}$ belongs to a same exponential family with cumulant function $F(\theta)$, we have

\begin{equation}
\JS_G(p_{\theta_1},p_{\theta_2}) =
 {\frac{1}{4}(\theta_2-\theta_1)^\top(\nabla F(\theta_2)-\nabla F(\theta_1))}-{\left(
\frac{F(\theta_1)+F(\theta_2)}{2}-F\left(\frac{\theta_1+\theta_2}{2}\right)
\right)},
\end{equation}
since $J(p_{\theta_1},p_{\theta_2})=\inner{\theta_2-\theta_1}{\nabla F(\theta_2)-\nabla F(\theta_1)}$ amounts to a symmetrized Bregman divergence.
\end{Corollary}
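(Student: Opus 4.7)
The plan is to substitute directly into the decomposition
$\JS_G(p_1,p_2)=\tfrac{1}{4}J(p_1,p_2)-B(p_1,p_2)$
established in the proposition preceding this corollary, and to compute each of the two pieces in closed form using structural properties of exponential families.

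For the Jeffreys piece, I would invoke the standard fact that the KLD between two members of an exponential family equals a Bregman divergence generated by $F$: $\KL(p_{\theta_1},p_{\theta_2})=F(\theta_2)-F(\theta_1)-\inner{\nabla F(\theta_1)}{\theta_2-\theta_1}$. Summing this with its swap $\KL(p_{\theta_2},p_{\theta_1})$ makes the function values cancel and leaves the symmetrized Bregman divergence
$J(p_{\theta_1},p_{\theta_2})=\inner{\theta_2-\theta_1}{\nabla F(\theta_2)-\nabla F(\theta_1)}$, which is exactly the formula quoted in the statement and supplies the $\tfrac{1}{4}$ term.

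For the Bhattacharyya piece, I would exploit the fact (already recalled in the paragraph just before the corollary) that the geometric mixture of two exponential family densities $p_{\theta_1}^{1/2}p_{\theta_2}^{1/2}$ is, up to a multiplicative constant, the exponential family density at the midpoint $\tfrac{\theta_1+\theta_2}{2}$. Writing both sides in canonical form, the constant is $\exp\!\bigl(F(\tfrac{\theta_1+\theta_2}{2})-\tfrac{F(\theta_1)+F(\theta_2)}{2}\bigr)$, so integrating gives $Z_G(p_{\theta_1},p_{\theta_2})=\exp(-J_{F,1/2}(\theta_1,\theta_2))$ with $J_{F,1/2}$ the (unskewed) Jensen divergence of $F$. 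Taking $-\log$ yields $B(p_{\theta_1},p_{\theta_2})=\tfrac{F(\theta_1)+F(\theta_2)}{2}-F\bigl(\tfrac{\theta_1+\theta_2}{2}\bigr)$, which is the second parenthesized term in the target formula.

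Assembling the two pieces and changing the sign on the Bhattacharyya contribution produces the stated identity. There is no real obstacle here: both ingredients are one-line computations once one recognizes that $\KL$ is a Bregman divergence in $\theta$ and that geometric averaging of exponential family densities corresponds to arithmetic averaging of natural parameters, together with the Jensen-gap normalizer. The most delicate bookkeeping is simply keeping the orientation of the Bregman divergence consistent when symmetrizing, so that the gradients combine cleanly into $\nabla F(\theta_2)-\nabla F(\theta_1)$.
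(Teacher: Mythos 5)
Your proposal is correct and follows essentially the same route as the paper: it substitutes into the decomposition $\JS_G=\tfrac{1}{4}J-B$, identifies $J(p_{\theta_1},p_{\theta_2})$ with the symmetrized Bregman divergence $\inner{\theta_2-\theta_1}{\nabla F(\theta_2)-\nabla F(\theta_1)}$, and identifies $B(p_{\theta_1},p_{\theta_2})$ with the Jensen divergence $J_F(\theta_1,\theta_2)=\tfrac{F(\theta_1)+F(\theta_2)}{2}-F\bigl(\tfrac{\theta_1+\theta_2}{2}\bigr)$ via the normalizer of the geometric mixture. The paper's own proof merely asserts these two identities (and in fact contains a typo, writing $J$ where $B$ is meant in the second one), whereas you actually derive both, so your write-up is if anything more complete.
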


\begin{proof}
We have $J(p_{\theta_1},p_{\theta_2})=(\theta_2-\theta_1)^\top(\nabla F(\theta_2)-\nabla F(\theta_1))$ and 
$J(p_{\theta_1},p_{\theta_2})=J_F(\theta_1,\theta_2)$.
\end{proof}

The extended geometric Jensen--Shannon divergence and geometric Jensen--Shannon divergence between two densities of an exponential family is given by
\begin{eqnarray}
\JS_G(p_{\theta_1},p_{\theta_2}) &=& {\frac{1}{4}(\theta_2-\theta_1)^\top(\nabla F(\theta_2)-\nabla F(\theta_1))}-{\left(
\frac{F(\theta_1)+F(\theta_2)}{2}-F\left(\frac{\theta_1+\theta_2}{2}\right)
\right)}, \label{eq:GJSD-EF}\\
\JS_{\tilde{G}}(p_{\theta_1},p_{\theta_2}) &=& \frac{1}{4}\inner{\theta_2-\theta_1}{\nabla F(\theta_2)-\nabla F(\theta_1)}-\exp(-J_F(\theta_1,\theta_2))-1,\label{eq:ExtGJSD-EF} \\
{\JS^*}_G(p_{\theta_1},p_{\theta_2}) &=& J_F(\theta_1,\theta_2)
\end{eqnarray}

\begin{Remark}
Given two densities $p_1$ and $p_2$, the family $\calG$ of geometric mixtures $\{(p_1p_2)_{G_\alpha}\propto p_1^\alpha\, p_2^{1-\alpha} \st \alpha\in(0,1)\}$ forms a 1D exponential family that has been termed likelihood ratio exponential family~\cite{grunwald2007minimum} (LREF).  
The cumulant function of this LREF is $F(\alpha)=-B_\alpha(p_1,p_2)$.
Hence, $\calG$ has also been  called a Bhattacharyya arc or Hellinger arc in the literature~\cite{cena2007exponential}.
However, notice that $\KL(p_i:(p_1p_2)_{G_\alpha})$ does not amount necessarily to a Bregman divergence because neither $p_1$ nor $p_2$ belongs to $\calG$.
\end{Remark}

\subsection{Closed-form formula for Gaussian distributions}

Let us report the corresponding closed-form formula for $d$-variate Gaussian distributions.

When $\alpha=\frac{1}{2}$, we proved that $\JS_G(p_1,p_2) = \frac{1}{4} \, J(p_1,p_2)-B(p_1,p_2)$ and 
  $\JS_{\tG}^+(p_1,p_2) = \frac{1}{4} \, J(p_1,p_2) +\exp(-B(p_1,p_2))-1$ where $\BC(p_1,p_2)=\exp(-B(p_1,p_2))$.
	Thus for the case of balanced geometric mixtures, we need to report the closed-form for the Jeffreys and Bhattacharyya distances:
	\begin{eqnarray*}
J(p_{\mu_1,\Sigma_1},p_{\mu_2,\Sigma_2}) &=& 
\frac{1}{2}\,\left(
\tr\left(\Sigma_1\Sigma_2^{-1}+\Sigma_2\Sigma_1^{-1}\right)
+ (\mu_1-\mu_2)^\top(\Sigma_1^{-1}+\Sigma_2^{-1})(\mu_1-\mu_2)-2d
\right)\
,\\	
B(p_{\mu_1,\Sigma_1},p_{\mu_2,\Sigma_2}) &=& 
\frac{1}{8}\,(\mu_1 - \mu_2)^\top \bar\Sigma^{-1} (\mu_1 - \mu_2)
+ \frac{1}{2}\log\left(
\frac{\det \bar\Sigma}{\sqrt{\det \Sigma_1 \, \det \Sigma_2}}
\right),
	\end{eqnarray*}
	where $\bar\Sigma = \frac{1}{2}\, \left(\Sigma_1 + \Sigma_2\right)$.

Otherwise, for arbitrary weighted geometric mixture $G_\alpha$,
define $(\theta_1\theta_2)_\alpha=\alpha\theta_1+(1-\alpha)\theta_2$, the weighted linear interpolation of the natural parameters $\theta_1$ and $\theta_2$.

\begin{Corollary}\label{cor:GJSDGaussian}
The skew $G$-Jensen--Shannon divergence $\JS_\alpha^G$ and the dual skew $G$-Jensen--Shannon divergence ${\JS^*}_\alpha^G$ between two $d$-variate Gaussian distributions $N(\mu_1,\Sigma_1)$ and $N(\mu_2,\Sigma_2)$ is
\begin{eqnarray*}
{\JS}_{G_\alpha}(p_{(\mu_1,\Sigma_1)},p_{(\mu_2,\Sigma_2)}) 
&=& \alpha\,\KL(p_{(\mu_1,\Sigma_1)},p_{(\mu_\alpha,\Sigma_\alpha)})+ (1-\alpha)\, \KL(p_{(\mu_2,\Sigma_2)},p_{(\mu_\alpha,\Sigma_\alpha)}),\\
&=& \alpha\, B_F((\theta_1\theta_2)_\alpha,\theta_1)+
(1-\alpha)\, B_F((\theta_1\theta_2)_\alpha,\theta_2),\\
&=& 
\frac{1}{2} \left(\tr\left(\Sigma_{\alpha}^{-1}(\alpha\Sigma_{1}+(1-\alpha)\Sigma_{2})\right) + \log\left(\frac{|\Sigma_{\alpha}|}{|\Sigma_{1}|^{\alpha}|\Sigma_{2}|^{1-\alpha}}\right)\right.\\
&=& \left. + \alpha(\mu_{\alpha}-\mu_{1})^{\top}\Sigma_{\alpha}^{-1}(\mu_{\alpha}-\mu_{1}) + (1-\alpha)(\mu_{\alpha}-\mu_{2})^{\top}\Sigma_{\alpha}^{-1}(\mu_{\alpha}-\mu_{2}) - d\right)
\\
{\JS}_{G_\alpha}^*(p_{(\mu_1,\Sigma_1)},p_{(\mu_2,\Sigma_2)}) &=& (1-\alpha)\, \KL(p_{(\mu_\alpha,\Sigma_\alpha)},p_{(\mu_1,\Sigma_1)})
+\alpha \KL(p_{(\mu_\alpha,\Sigma_\alpha)},p_{(\mu_2,\Sigma_2)}),\\
&=& \alpha\, B_F(\theta_1,(\theta_1\theta_2)_\alpha)+
(1-\alpha)\,  B_F(\theta_2,(\theta_1\theta_2)_\alpha),\\
&=& J_{F,\alpha}(\theta_1,\theta_2)=:{B_\alpha}(p_{(\mu_1,\Sigma_1)},p_{(\mu_2,\Sigma_2)}),\\
&=& \frac{1}{2}\left(\alpha\mu_1^\top\Sigma_1^{-1}\mu_1+(1-\alpha)\mu_2^\top\Sigma_2^{-1}\mu_2-\mu_\alpha^\top\Sigma_\alpha^{-1}\mu_\alpha
+\log \frac{|\Sigma_1|^{\alpha}|\Sigma_2|^{1-\alpha}}{|\Sigma_\alpha|}\right),\nonumber\\
F(\mu,\Sigma) &=& \frac{1}{2}\left(\mu^\top \Sigma^{-1}\mu +\log |\Sigma| + d\log2\pi \right),\\
F(\theta_v,\theta_M)&=& \frac{1}{2}\left( d\log\pi -\log |\theta_M|+\frac{1}{2} \theta_v^\top \theta_M^{-1} \theta_v \right),\\
\Delta(\theta_1,\theta_2) &=& \exp({-J_{F,\alpha}(\theta_1,\theta_2)})+J_{F,\alpha}(\theta_1,\theta_2)-1,
\end{eqnarray*}

where $\Sigma_\alpha$ is the matrix harmonic barycenter:

\begin{equation}
\Sigma_\alpha=  \left(\alpha\Sigma_1^{-1}+(1-\alpha) \Sigma_2^{-1}\right)^{-1},
\end{equation}
 and 
\begin{equation}
\mu_\alpha=\Sigma_\alpha \left(\alpha\Sigma_1^{-1}\mu_1+(1-\alpha) \Sigma_2^{-1}\mu_2\right).
\end{equation}
\end{Corollary}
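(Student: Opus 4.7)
The plan is to reduce everything to the exponential family machinery already assembled earlier in the paper and then specialize to the Gaussian cumulant $F$. The cornerstone is the observation, recorded just before the corollary, that the weighted geometric mixture of two exponential family densities $p_{\theta_1}, p_{\theta_2}$ is itself in the family with natural parameter $(\theta_1\theta_2)_\alpha = \alpha\theta_1+(1-\alpha)\theta_2$, and with normalizer $Z_{G_\alpha}(\theta_1,\theta_2)=\exp(-J_{F,\alpha}(\theta_1,\theta_2))$. Together with the standard identity $\KL(p_\theta,p_{\theta'})=B_F(\theta',\theta)$ (Bregman divergence), this immediately converts the definition
\[
\JS_{G_\alpha}(p_1,p_2) = \alpha\,\KL(p_{\theta_1},p_{(\theta_1\theta_2)_\alpha})+(1-\alpha)\,\KL(p_{\theta_2},p_{(\theta_1\theta_2)_\alpha})
\]
into the weighted Bregman form $\alpha B_F((\theta_1\theta_2)_\alpha,\theta_1)+(1-\alpha)B_F((\theta_1\theta_2)_\alpha,\theta_2)$ claimed in the corollary.

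For the dual (left-sided) skew G-JSD, I would invoke Proposition~\ref{prop:JSBhat}, which identifies $({\KL^*})^\JS_{G_\alpha}$ with the skew Bhattacharyya $B_\alpha$. For exponential families, $B_\alpha(p_{\theta_1},p_{\theta_2})=-\log Z_{G_\alpha}(\theta_1,\theta_2)=J_{F,\alpha}(\theta_1,\theta_2)$, which directly yields ${\JS}_{G_\alpha}^*=J_{F,\alpha}$ and the Bregman decomposition $\alpha B_F(\theta_1,(\theta_1\theta_2)_\alpha)+(1-\alpha)B_F(\theta_2,(\theta_1\theta_2)_\alpha)$ via the standard three-point identity for Bregman divergences. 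The gap formula $\Delta(\theta_1,\theta_2)=\exp(-J_{F,\alpha})+J_{F,\alpha}-1$ then follows verbatim from Proposition~\ref{prop:gap} after substituting $Z_\alpha=\exp(-J_{F,\alpha})$.

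To specialize to Gaussians, I would use the conversions $\theta_v=\Sigma^{-1}\mu$, $\theta_M=\tfrac12\Sigma^{-1}$, so that linearity of the mixing on natural parameters gives
\[
\theta_{M,\alpha}=\tfrac12(\alpha\Sigma_1^{-1}+(1-\alpha)\Sigma_2^{-1}), \qquad \theta_{v,\alpha}=\alpha\Sigma_1^{-1}\mu_1+(1-\alpha)\Sigma_2^{-1}\mu_2.
\]
Inverting yields precisely the harmonic covariance $\Sigma_\alpha=(\alpha\Sigma_1^{-1}+(1-\alpha)\Sigma_2^{-1})^{-1}$ and the stated $\mu_\alpha$. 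Plugging these into the classical closed-form for $\KL(p_{\mu_i,\Sigma_i},p_{\mu_\alpha,\Sigma_\alpha})$ and weighting by $\alpha,1-\alpha$ produces the explicit Gaussian formula for $\JS_{G_\alpha}$; for the dual formula one instead expands $\alpha F(\theta_1)+(1-\alpha)F(\theta_2)-F((\theta_1\theta_2)_\alpha)$ using the Gaussian cumulant $F(\theta_v,\theta_M)=\tfrac12(d\log\pi-\log|\theta_M|+\tfrac12\theta_v^\top\theta_M^{-1}\theta_v)$, reparameterizing back to $(\mu,\Sigma)$ via $\mu^\top\Sigma^{-1}\mu=\tfrac12\theta_v^\top\theta_M^{-1}\theta_v$ and $\log|\Sigma|=d\log 2-\log|\theta_M|$.

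The principal obstacle is purely bookkeeping: in the $\JS_{G_\alpha}$ Gaussian expression, one must carefully collect the log-determinant terms into the single ratio $\log(|\Sigma_\alpha|/(|\Sigma_1|^\alpha|\Sigma_2|^{1-\alpha}))$, exploit the linearity of the trace to combine $\alpha\tr(\Sigma_\alpha^{-1}\Sigma_1)+(1-\alpha)\tr(\Sigma_\alpha^{-1}\Sigma_2)=\tr(\Sigma_\alpha^{-1}(\alpha\Sigma_1+(1-\alpha)\Sigma_2))$, and preserve the two quadratic forms in $\mu_i-\mu_\alpha$ rather than attempting to merge them (since in general $\mu_\alpha$ is \emph{not} the convex combination of $\mu_1,\mu_2$). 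Apart from this algebraic care, every step is a direct consequence of the exponential family identity for geometric mixtures and of Proposition~\ref{prop:JSBhat}.
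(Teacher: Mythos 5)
Your proposal is correct and follows essentially the same route the paper takes implicitly: the corollary is a direct specialization of the exponential-family facts assembled just before it (geometric mixtures stay in the family with natural parameter $\alpha\theta_1+(1-\alpha)\theta_2$ and normalizer $\exp(-J_{F,\alpha})$, KLD between family members equals a Bregman divergence on swapped natural parameters, $\JS^*_{G_\alpha}=B_\alpha=J_{F,\alpha}$ via Proposition~\ref{prop:JSBhat}, and Proposition~\ref{prop:gap} for $\Delta$) to the Gaussian cumulant and the natural-parameter conversions. The only slip is a sign in your log-determinant reparameterization: since $\theta_M=\tfrac12\Sigma^{-1}$ one has $\log|\Sigma|=-d\log 2-\log|\theta_M|$, not $d\log 2-\log|\theta_M|$; this is immaterial to the argument.
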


\section{Extended and normalized G-JSDs as regularizations of the ordinary JSD}\label{sec:reg}

The $M$-Jensen--Shannon divergence $\JS_M(p,q)$ can be interpreted as a regularization of the ordinary JSD:

\begin{Proposition}[JSD regularization]\label{prop:regMJSD}
For any arbitrary mean $M$, the following identity holds:
\begin{equation}\label{eq:jsdreg}
\JS_M(p_1,p_2) = \JS(p_1,p_2) + \KL\left(\frac{p_1+p_2}{2},(p_1p_2)_M\right).
\end{equation}
\end{Proposition}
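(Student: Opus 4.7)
The plan is to compare the two divergences term-by-term and reduce the difference to a single KL term via the arithmetic mixture $a \eqdef \frac{p_1+p_2}{2}$. Writing $m \eqdef (p_1p_2)_M$ for the normalized $M$-mixture, both $\JS(p_1,p_2)$ and $\JS_M(p_1,p_2)$ are arithmetic averages of KLDs whose first argument is $p_i$; only the second argument differs (either $a$ or $m$). So it suffices to show
\[
\JS_M(p_1,p_2) - \JS(p_1,p_2) \;=\; \KL(a,m).
\]

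First I would expand the left-hand side using the definitions~\eqref{eq:jsd} and Definition~\ref{def:MJSD} with $\alpha=\beta=\frac{1}{2}$, and note the cancellation of the $p_i\log p_i$ entropic terms:
\[
\KL(p_i,m) - \KL(p_i,a) \;=\; \int p_i \log\frac{a}{m}\,\dmu.
\]
Summing this identity for $i=1,2$ and multiplying by $\tfrac{1}{2}$ yields
\[
\JS_M(p_1,p_2) - \JS(p_1,p_2) \;=\; \frac{1}{2}\int (p_1+p_2)\log\frac{a}{m}\,\dmu \;=\; \int a\log\frac{a}{m}\,\dmu \;=\; \KL(a,m),
\]
which is exactly the claimed formula.

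There is essentially no obstacle: the proof is a two-line manipulation exploiting the fact that $a$ is the arithmetic mean of $p_1$ and $p_2$, so $\frac{1}{2}(p_1+p_2) = a$ converts the weighted sum $\tfrac{1}{2}\!\int(p_1+p_2)\log\frac{a}{m}$ directly into $\KL(a,m)$. The only mild subtlety is that $m$ must be a well-defined probability density (so that $\log\frac{a}{m}$ and hence $\KL(a,m)$ make sense), which is guaranteed by the earlier observation that $Z_M(p_1,p_2) \in (0,2]$ whenever the supports of $p_1,p_2$ intersect. No assumption on $M$ beyond its being a mean (needed only to ensure $m$ is a valid density) is used, which is consistent with the statement's "for any arbitrary mean $M$".
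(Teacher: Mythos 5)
Your proof is correct and is essentially the paper's own argument: both insert the arithmetic mixture $a=\frac{p_1+p_2}{2}$ into the logarithm, writing $\log\frac{p_i}{(p_1p_2)_M}=\log\frac{p_i}{a}+\log\frac{a}{(p_1p_2)_M}$, and then use $\frac{1}{2}(p_1+p_2)=a$ to collect the cross terms into $\KL\left(a,(p_1p_2)_M\right)$. Your remark on the well-definedness of the normalized $M$-mixture is a reasonable (and harmless) addition that the paper handles earlier via the bound $0<Z_{M}(p_1,p_2)\leq 2$.
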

Notice that $(p_1p_2)_A=\frac{p_1+p_2}{2}$.

\begin{proof}
We have
\begin{eqnarray*}
\JS_M(p_1,p_2) &\eqdef &\frac{1}{2}\left(\KL(p_1,(p_1p_2)_M)+\KL(p_2,(p_1p_2)_M) \right),\\
&=& \frac{1}{2}\int\left( p_1\log\frac{p_1\,(p_1p_2)_A}{(p_1p_2)_M\, (p_1p_2)^A } + p_2\log\frac{p_2\, (p_1p_2)_A}{(p_1p_2)_M\, (p_1p_2)_A} \right)\, \dmu,\\
&=& \frac{1}{2} \int \left( 
 p_1\log\frac{p_1}{(p_1p_2)_A} + p_1\log\frac{(p_1p_2)_A}{(p_1p_2)_M}
 + p_2\log\frac{p_2}{(p_1p_2)_A}+p_2\log\frac{(p_1p_2)_A}{(p_1p_2)_M} 
\right) \, \dmu,\\
&=& \frac{1}{2}\int\left(p_1\log\frac{p_1}{(p_1p_2)_A}+p_2\log\frac{p_2}{(p_1p_2)_A}\right) \,\dmu 
+\int \frac{1}{2}\, (p_1+p_2)\log \frac{(p_1p_2)_A}{(p_1p_2)_M} \,\dmu,\\
&=& \JS(p_1,p_2) + \int (p_1p_2)_A\log \frac{(p_1p_2)_A}{(p_1p_2)_M} \,\dmu,\\
&=& \JS(p_1,p_2) + \KL((p_1p_2)_A,(p_1p_2)_M).
\end{eqnarray*}
\end{proof}

\begin{Remark}\label{rk:klmn}
One way to symmetrize the KLD is to consider two distinct symmetric means $M_1(a,b)=M_1(b,a)$ and $M_2(a,b)=M_2(b,a)$ and define
$$
\KL_{M_1,M_2}(p_1,p_2) = \KL((p_1p_2)_{M_1},(p_1p_2)_{M_2}) = \KL_{M_1,M_2}(p_2,p_1).
$$

We notice that $\sqrt{\KL^{A,G}}$ is not a metric distance by reporting a triple of points $(p_1,p_2,p_3)$ that fails the triangle inequality.
Consider $p_1=(0.55,0.45)$, $p_2=(0.002,0.998)$, and $p_3=(0.045,0.955)$.
We have $\sqrt{\KL_{M_1,M_2}(p_1,p_2)}=0.5374165\ldots$,
$\sqrt{\KL_{M_1,M_2}(p_1,p_3)}=0.1759400\ldots$, and
$\sqrt{\KL_{M_1,M_2}(p_3,p_2)}=0.08485931\ldots$.
The triangle inequality defect is
$$
\sqrt{\KL_{M_1,M_2}(p_1,p_2)}-(\sqrt{\KL_{M_1,M_2}(p_1,p_3)}+\sqrt{\KL_{M_1,M_2}(p_3,p_2)})=0.2766171\ldots.
$$

We can also similarly symmetrize the extended KLD as follows:
$$
\KL_{\tM_1,\tM_2}^+(q_1,q_2) = \KL^+((q_1q_2)_{\tM_1},(q_1q_2)_{\tM_2}) = \KL_{\tM_1,\tM_2}(q_2,q_1).
$$

In particular, when $M_1=A$ and $M_2=G$, we get the $\KL_{A,M}$ divergence:
$$
\KL_{A,M}(p_1,p_2)=\frac{p_1+p_2}{2} \, \log \frac{p_1+p_2}{2\sqrt{p_1p_2}}+\log Z_G(p_1,p_2),
$$
which is related to Taneja $T$-divergence~\cite{taneja1995new}:
\begin{equation}\label{eq:td}
T(p_1,p_2)=\int \frac{p_1+p_2}{2}\log \frac{p_1+p_2}{2\sqrt{p_1p_2}}.
\end{equation}
The $T$-divergence is a $f$-divergence~\cite{AliSilvey-1966,Csiszar-1967} obtained for the generator $f_T(u)=\frac{1+u}{2}\log\frac{1+u}{2\sqrt{u}}$.
\end{Remark}

\begin{Corollary}[JSD lower bound on  $M$-JSD]\label{cor:MJSDLB}
We have $\JS_M(p,q)\geq \JS(p,q)$.
\end{Corollary}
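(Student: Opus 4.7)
The plan is to read off the corollary as an immediate consequence of Proposition~\ref{prop:regMJSD}, which I am allowed to assume. That proposition gives the identity
$$
\JS_M(p_1,p_2) \;=\; \JS(p_1,p_2) \;+\; \KL\!\left(\frac{p_1+p_2}{2},\,(p_1p_2)_M\right).
$$
So the only thing left to verify is that the extra term on the right is non-negative.

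The key observation is that both arguments of the $\KL$ remainder are bona fide probability densities: $\frac{p_1+p_2}{2}$ is a normalized mixture of two probability densities, and $(p_1p_2)_M$ is normalized by construction (it is the normalized $M$-mixture, divided through by $Z_{M_{1/2}}(p_1,p_2)$, which is finite and strictly positive whenever $p_1,p_2$ have overlapping support). Consequently Gibbs' inequality applies, giving
$$
\KL\!\left(\frac{p_1+p_2}{2},\,(p_1p_2)_M\right) \;\geq\; 0,
$$
with equality if and only if $\frac{p_1+p_2}{2}=(p_1p_2)_M$ $\mu$-almost everywhere, which in particular holds when $M=A$.

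Combining the identity with this non-negativity bound yields $\JS_M(p,q)\geq \JS(p,q)$ as claimed. There is essentially no obstacle: the work has already been done inside Proposition~\ref{prop:regMJSD}, and the corollary is a one-line consequence once we note that the remainder term is a KL between two probability densities. The only small conceptual point worth mentioning is that the identity rewrites the gap $\JS_M-\JS$ as an explicit KL divergence between the arithmetic mixture and the $M$-mixture, which both explains the inequality and characterizes its tightness.
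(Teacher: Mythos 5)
Your proof is correct and follows exactly the paper's own argument: invoke the identity from Proposition~\ref{prop:regMJSD} and apply Gibbs' inequality to the $\KL$ remainder. The extra remarks on normalization of both arguments and on the equality case are fine but not needed beyond what the paper states.
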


\begin{proof}
Since $\JS_M(p,q) = \JS(p,q) + \KL\left(\frac{p+q}{2},(pq)_M\right)$ and $\KL\geq 0$ by Gibbs' inequality, we have
$\JS_M(p,q)\geq \JS(p,q)$.
\end{proof}

Since the extended $M$-JSD  is
 $\JS^+_{\tilde{M}_\alpha,\beta}(p_1,p_2) = \JS_{{M}_\alpha,\beta}(p_1,p_2) +Z_\alpha-\log(Z_\alpha)  -1$,
the extended $M$-JSD $\JS^+_{\tilde{M}_\alpha,\beta}$ can also be interpreted as another regularization of the Jensen--Shannon divergence when dealt with probability densities:

\begin{equation}
\JS^+_{\tilde{M}_\alpha,\beta}(p_1,p_2) = \JS(p_1,p_2) + \KL\left(\frac{p_1+p_2}{2},(p_1p_2)_M\right)+ Z_{M_\alpha}(p_1,p_2)-\log(Z_{M_\alpha}(p_1,p_2))  -1.
\end{equation}

It is well-known that the JSD can be rewritten as a diversity index~\cite{JSD-1991} using the concave entropy:
\begin{equation}\label{eq:jsdh}
\JS(p_1,p_2)= H\left(\frac{p_1+p_2}{2}\right)-\frac{H(p_1)+H(p_2)}{2}.
\end{equation}

We generalize this decomposition as a difference of a cross-entropy term minus entropies as follows:

\begin{Proposition}[M-JSD cross-entropy decomposition]
We have
$$
\JS_M(p_1,p_2)=H^\times((p_1p_2)_A,(p_1p_2)_M)-\frac{H(p_1)+H(p_2)}{2}.
$$
\end{Proposition}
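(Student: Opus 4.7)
The plan is to expand the defining KL divergences in $\JS_M(p_1,p_2)$ into their entropy and cross-entropy pieces, then recombine the two cross-entropy terms into a single cross-entropy against the arithmetic mixture $(p_1p_2)_A=\frac{p_1+p_2}{2}$.

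Concretely, I would start from the definition
\begin{equation*}
\JS_M(p_1,p_2) = \frac{1}{2}\left(\KL(p_1,(p_1p_2)_M) + \KL(p_2,(p_1p_2)_M)\right),
\end{equation*}
and use the elementary identity $\KL(p,m) = -H(p) + H^\times(p,m)$, where $H^\times(p,m) \eqdef -\int p\log m\,\dmu$. This immediately gives
\begin{equation*}
\JS_M(p_1,p_2) = -\frac{H(p_1)+H(p_2)}{2} + \frac{1}{2}\left(H^\times(p_1,(p_1p_2)_M) + H^\times(p_2,(p_1p_2)_M)\right).
\end{equation*}

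The final step is to observe that cross-entropy is linear in its first argument: writing out the integrals,
\begin{equation*}
\frac{1}{2}\left(H^\times(p_1,(p_1p_2)_M) + H^\times(p_2,(p_1p_2)_M)\right) = -\int \frac{p_1+p_2}{2}\,\log (p_1p_2)_M\,\dmu = H^\times\!\left((p_1p_2)_A,(p_1p_2)_M\right),
\end{equation*}
since $(p_1p_2)_A=\frac{p_1+p_2}{2}$. Substituting yields the claimed decomposition.

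There is no real obstacle here: the identity is essentially a bookkeeping exercise that mirrors the classical derivation of Eq.~\ref{eq:jsdh}, with the only twist being that the reference measure inside the log is $(p_1p_2)_M$ rather than $(p_1p_2)_A$, which forces a cross-entropy against $(p_1p_2)_M$ in place of the entropy $H((p_1p_2)_A)$. Note that one recovers Eq.~\ref{eq:jsdh} in the special case $M=A$, since then $H^\times((p_1p_2)_A,(p_1p_2)_A) = H((p_1p_2)_A)$; this provides a useful sanity check. One should also remark that the proof only requires $(p_1p_2)_M$ to be $\mu$-almost everywhere positive on the support of $p_1+p_2$ so that the cross-entropy integral is well-defined, which follows from the in-betweenness property of $M$ already used earlier in the paper.
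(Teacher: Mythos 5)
Your proof is correct, and it takes a slightly different (and more direct) route than the paper's. The paper obtains the identity by combining two previously established facts: the regularization formula $\JS_M(p_1,p_2)=\JS(p_1,p_2)+\KL\left(\frac{p_1+p_2}{2},(p_1p_2)_M\right)$ of Proposition~\ref{prop:regMJSD}, the classical entropy decomposition $\JS(p_1,p_2)=H\left(\frac{p_1+p_2}{2}\right)-\frac{H(p_1)+H(p_2)}{2}$, and the expansion $\KL(p,m)=H^\times(p,m)-H(p)$ applied to the regularization term, after which the two $H\left(\frac{p_1+p_2}{2}\right)$ terms cancel. You instead expand each $\KL(p_i,(p_1p_2)_M)$ in the definition of $\JS_M$ directly as $H^\times(p_i,(p_1p_2)_M)-H(p_i)$ and then use linearity of the cross-entropy in its first argument to merge the two cross-entropy terms into $H^\times\left((p_1p_2)_A,(p_1p_2)_M\right)$. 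Both arguments are elementary bookkeeping; yours is self-contained and does not need the regularization proposition, while the paper's emphasizes that the cross-entropy decomposition is an immediate reformulation of the regularization viewpoint. Your sanity check that $M=A$ recovers the diversity-index formula, and your remark on positivity of $(p_1p_2)_M$ on the support of $p_1+p_2$ (which the paper leaves implicit), are both apt.
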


\begin{proof}
We have from Proposition~\ref{prop:regMJSD}:
$$
\JS_M(p_1,p_2) = \JS(p_1,p_2) + \KL\left(\frac{p_1+p_2}{2},(p_1p_2)_M\right).
$$
Since $\KL(p_1,p_2)=H^\times(p_1,p_2)-H(p_1)$ where $H^\times(p_1,p_2)=-\int p_1\log p_2\,\dmu$ is the cross-entropy and $H(p)=-\int p\log p\,\dmu=H^\times(p,p)$ is the entropy.
Pluggin Eq.~\ref{eq:jsdh} in Eq.~\ref{eq:jsdreg}, we get
\begin{eqnarray*}
\JS_M(p_1,p_2)  &=& H\left(\frac{p_1+p_2}{2}\right)-\frac{H(p_1)+H(p_2)}{2} + H^\times\left(\frac{p_1+p_2}{2},(p_1p_2)_M\right)- H\left(\frac{p_1+p_2}{2}\right),\\
&=& H^\times\left(\frac{p_1+p_2}{2},(p_1p_2)_M\right) -\frac{H(p_1)+H(p_2)}{2}.
\end{eqnarray*}

Note that when $M=A$, the arithmetic mean, we have $H^\times\left(\frac{p_1+p_2}{2},(p_1p_2)_M\right)=H\left(\frac{p_1+p_2}{2}\right)$ and 
 we recover the fact that $\JS_M(p_1,p_2)=\JS(p_1,p_2)$.
\end{proof}

\section{Estimation and approximation of the extended and normalized M-JSDs}\label{sec:est}

Let us recall the two definitions of the extended M-JSD and the normalized M-JSD (for the case of $\alpha=\beta=\frac{1}{2}$)
 between two normalized densities $p_1$ and $p_2$:
\begin{eqnarray*}
\JS_M(p_1,p_2) &=& \frac{1}{2}\left( \KL\left(p_1,(p_1p_2)_M\right) + \KL\left(p_2,(p_1p_2)_M\right) \right),\\
\JS_M^+(p_1,p_2) &=& \frac{1}{2}\left(  \KL^+\left(p_1,(p_1p_2)_{\tM}\right)   +  \KL^+\left(p_2,(p_1p_2)_{\tM}\right) \right),
\end{eqnarray*}
where $(p_1p_2)_M(x)=\frac{M(p_1(x),p_2(x))}{Z_M(p_1,p_2)}$ (with $Z_M(p_1,p_2)=\int M(p_1(x),p_2(x))\,\dmu(x)$) 
and $(p_1p_2)_{\tM}(x)=M(p_1(x),p_2(x))$.

In practice, one needs to estimate  the extended and normalized G-JSDs when they do not admit closed-form formula.

\subsection{Monte Carlo estimators}

To estimate $\JS_M(p_1,p_2)$, we can use Monte Carlo samplings to estimate both KLD integrals and mixture normalizers $Z_M$;
For example, the normalizer $Z_M(p_1,p_2)$ is  estimated by
$$
\hat{Z}_M(p_1,p_2)= \frac{1}{s} \sum_{i=1}^s \frac{1}{r(x_i)}\, M(p_1(x_i),p_2(x_i)),
$$
where $r(x)$ is the proposal distribution which can be chosen according to the mean $M$ and the types of probability distributions $p_1$ and $p_2$, and $x_1,\ldots, x_s$ are $s$ identically and independently samples (iid.) from $r(x)$.
However, since $(p_1p_2)_M(x)$ is now estimated as $(p_1p_2)_{\hat{M}(x)}$, it is not anymore a normalized $M$-mixture, and thus we consider estimating
$$
\JS_{\hat{M}}^+(p_1,p_2) = \frac{1}{2}\left( \KL^+\left(p_1,(p_1p_2)_{\hat{M}}\right) + \KL^+\left(p_2,(p_1p_2)_{\hat{M}}\right) \right)
$$
to ensure the non-negativity of the divergence $\JSD_{\hat{M}}$.

Let us consider the estimation of the term
$$
\KL^+\left(p_1,(p_1p_2)_{\tM}\right) = \int \left( p_1\log\frac{p_1}{M(p_1,p_2)} + M(p_1,p_2) -p_1 \right)\,\dmu.
$$

By choosing the proposal distribution $r(x)=p_1(x)$, we have  $\KL^+\left(p_1,(p_1p_2)_{\hat{M}}\right) \approx \widehat{\KL^+}\left(p_1,(p_1p_2)_{\tM}\right)$ (for large enough $s$) where
$$
\widehat{\KL}^+\left(p_1,(p_1p_2)_{\tM}\right) = \frac{1}{s}  \sum_{i=1}^s \left(\log\frac{p_1(x_i)}{M(p_1(x_i),p_2(x_i))} + \frac{1}{p_1(x_i)}\, M(p_1(x_i),p_2(x_i)) -1\right).
$$

Monte Carlo (MC) stochastic integration~\cite{rubinstein2016simulation} is a well-studied topic in Statistics with many results on consistency and variance of MC estimators.

Note that even if we have a generic formula for the G-JSD between two densities of an exponential family given by Corollary~\ref{cor:gjsdef}, the cumulant function  $F(\theta)$ may not be in closed form~\cite{Cobb-1983,hayakawa2016estimation}.
This is the case when the sufficient statistic vector of the exponential family is $t(x)=(x,x^2,\ldots, x^m)$ (for $m\geq 5$) yielding the polynomial exponential family (also called exponential-polynomial family~\cite{hayakawa2016estimation}).

\subsection{Approximations via $\gamma$-divergences}\label{sec:gamma}

One way to circumvent the lack of computational tractable density normalizers is to consider the family of $\gamma$-divergences~\cite{fujisawa2008robust} instead of the KLD:
$$
\tD_\gamma(q_1,q_2)=\frac{1}{\gamma(1+\gamma)}\log I_\gamma(q_1,q_2) -\frac{1}{\gamma}\log I_\gamma(q_1,q_2)+\frac{1}{1+\gamma}\log I_\gamma(q_1,q_2),\quad\gamma>0,
$$
where
$$
I_\gamma(q_1,q_2) = \int q_1(x)\, q_2^\gamma(x) \,\dmu(x).
$$

The $\gamma$-divergences are projective divergences, i.e., they enjoy the property that
$$
\tD_\gamma(\lambda_1 q_1,\lambda_2 q_2)=\tD_\gamma(q_1,q_2),\quad\forall \lambda_1>0,\lambda_2>0.
$$
Furthermore, we have $\lim_{\gamma\rightarrow 0} \tD_\gamma(p_1,p_2)=\KL(p_1,p_2)$.
(Note that KLD is not projective.)

Let us define the projective $M$-JSD:
\begin{equation}\label{eq:jsdgamma}
\JS_{\tM,\gamma}(p_1,p_2) = \frac{1}{2}\left( \tD_\gamma\left(p_1,(p_1p_2)_{\tM}\right) + \tD_\gamma\left(p_2,(p_1p_2)_{\tM}\right) \right).
\end{equation}

We have for $\gamma=\eps$ small enough (e.g., $\eps\leq 10^{-3}$), $\JS_{M}(p_1,p_2)\approx \JS_{\tM,\gamma}(p_1,p_2)$ since
$$
\KL(p_1,(p_1p_2)_M) \approx_{\gamma=\eps} \tD_\gamma(p_1,(p_1p_2)_{\tM}).
$$

In particular, for exponential family densities $p_{\theta_1}(x)=\frac{q_{\theta_1}(x)}{Z(\theta_1)}$ and
$p_{\theta_2}(x)=\frac{q_{\theta_2}(x)}{Z(\theta_2)}$, we have
$$
I_\gamma(p_{\theta_1},p_{\theta_2})=\exp\left(F(\theta_1+\gamma\theta_2)-F(\theta_1)-\gamma F(\theta_2)\right),
$$
provided that $\theta_1+\gamma\theta_2$ belongs to the natural parameter space (otherwise, the integral $I_\gamma$ diverges).

Even when $F(\theta)$ is not known in closed form, we may estimate the $\gamma$-divergence by estimating the $I_\gamma$ integrals as follows:
$$
\hat{I}_\gamma(q_{\theta_1},q_{\theta_2}) \approx \frac{1}{s} \sum_{i=1}^s q_2(x_i),
$$
where $x_1,\ldots, x_s$ are iid. sampled from $p_1(x)$.
For example, we may use Monte Carlo importance sampling methods~\cite{kloek1978bayesian} or  exponential family Langevin dynamics~\cite{banerjee2022stability} to sample densities of exponential family densities with computationally intractable normalizers (e.g., polynomial exponential families).

\section{Summary and concluding remarks}\label{sec:concl}

In this paper, we first recalled the Jensen--Shannon symmetrization (JS-symmetrization) scheme of~\cite{GJSD-2019} for an arbitrary statistical dissimilarity $D(\cdot,\cdot)$ using an arbitrary weighted scalar mean $M_\alpha$ as follows:

$$
D^\JS_{M_\alpha,\beta}(p_1,p_2) \eqdef \beta \, D\left(p_1,{(p_1p_2)}_{M_\alpha}\right) + (1-\beta)\, D\left(p_2,{(p_1p_2)}_{M_\alpha}\right),\quad 
(\alpha,\beta) \in (0,1)^2, 
$$

In particular, we showed that the skewed Bhattacharyya distance and the Chernoff information can both be interpreted as JS-symmetrizations of the reverse Kullback--Leibler divergence.

Then we  defined two types of geometric Jensen--Shannon divergence between probability densities:
The first type $\JS_M$ requires to normalize $M$-mixtures and relies on the Kullback--Leibler divergence: $\JS_M=\KL^\JS_{M_{\frac{1}{2}},\frac{1}{2}}$.
The second type $\JS^+_{\tilde{M}}$ does not normalize $M$-mixtures and uses the extended Kullback--Leibler divergence $\KL^+$ to take into account unnormalized mixtures:
$\JS^+_{\tilde{M}}=\KL^{\JS^+}_{\tM_{\frac{1}{2}},\frac{1}{2}}$.
When $M$ is the arithmetic mean $A$, both $M$-JSD types coincide with the ordinary Jensen--Shannon divergence of Eq.~\ref{eq:jsd}.

We have shown that  both $M$-JSD types   can be interpreted as regularized Jensen--Shannon divergences $\JS$ with additive terms.
Namely, we have:
\begin{eqnarray*}
\JS_M(p_1,p_2)               &=& {\JS(p_1,p_2)} + \KL((p_1p_2)_A,(p_1p_2)_M),\\
\JS^+_{\tilde{M}}(p_1,p_2) &=& \JS_M(p_1,p_2) + Z_M(p_1,p_2)-\log Z_M(p_1,p_2)-1,\\
                               &=& { \JS(p_1,p_2)} + \KL((p_1p_2)_A,(p_1p_2)_M) + Z_M(p_1,p_2)-\log Z_M(p_1,p_2)-1,
\end{eqnarray*}
where $Z_M(p_1,p_2)=\int M(p_1,p_2)\,\dmu$ is the $M$-mixture normalizer.
The gap between these two types of M-JSD is
\begin{eqnarray*}
\Delta_M(p_1,p_2) &=& \JS_{\tM}^+(p_1,p_2)-\JS_M(p_1,p_2),\\
&=& Z_M(p_1,p_2)-\log Z_M(p_1,p_2)-1.
\end{eqnarray*}

When taking the geometric mean $M=G$, we showed that both G-JSD types can be expressed using the Jeffreys divergence and the Bhattacharyya divergence (or Bhattacharyya coefficient):

\begin{eqnarray*}
\JS_G(p_1,p_2) &=& \frac{1}{4} \, J(p_1,p_2)-B(p_1,p_2),\\
  \JS_{\tG}^+(p_1,p_2) &=& \frac{1}{4} \, J(p_1,p_2) +\exp(-B(p_1,p_2))-1,\\
	&=&  \frac{1}{4} \, J(p_1,p_2) +\BC(p_1,p_2)-1.
\end{eqnarray*}
Thus the gap between these two types of G-JSD is
\begin{eqnarray*}
\Delta_G(p_1,p_2) & \eqdef & \JS_{\tG}^+(p_1,p_2)-\JS_G(p_1,p_2),\\
 &=& \BC(p_1,p_2)+B(p_1,p_2)-1,\\
&=& Z_G(p_1,p_2)-\log Z_G(p_1,p_2)-1,
\end{eqnarray*}
since $Z_G(p_1,p_2)=\int \sqrt{p_1\, p_2}\dmu=\BC(p_1,p_2)$.

Although the square root of the Jensen--Shannon divergence yields a metric distance, this is not anymore the case 
for the geometric-JSD and the extended geometric-JSD: We reported counterexamples  in Remark~\ref{rk:gjsdnotmetric}.
Moreover, we have shown that the KL symmetrization $\sqrt{\KL((p_1p_2)_A,(p_1p_2)_G)}$ is not a metric distance (Remark~\ref{rk:klmn}).

We discussed the merit of the extended G-JSD which does not require to normalize the geometric mixture in \S\ref{sec:est}, 
and showed how to approximate the G-JSD using the projective $\gamma$-divergences~\cite{fujisawa2008robust} for $\gamma=\eps$, a small enough value (i.e., $\gamma=\eps=10^{-3}$). 
From the viewpoint of information geometry, the extended G-JSD has been shown to be a $f$-divergence~\cite{IG-2016} (separable divergence) while the G-JSD is not separable in general because of the normalization of mixtures (with exception of the ordinary JSD which is a $f$-divergence because the arithmetic mixtures do not require normalization).

We studied power JSDs by considering the power means and study in the $\pm\infty$ limits, the extended max-JSD and min-JSD:
We proved that the extended max-JSD is upper bounded by the total variation distance $\TV(p_1,p_2)=\frac{1}{2}\, \int |p_1-p_2|\,\dmu$:
$$
0\leq \JS^+_{\widetilde{\max}}(p_1,p_2)\leq \TV(p_1,p_2),
$$
and that the extended min-JSD is lower bounded as follows:
$$
\JS^+_{\widetilde{\min}}(p_1,p_2)\geq  \frac{1}{4}\, J(p_1,p_2)-\TV(p_1,p_2),
$$
where $J$ denotes Jeffreys's divergence: $J(p_1,p_2)=\KL(p_1,p_2)+\KL(p_2,p_1)$.

The advantage of using the extended G-JSD is that we do not need to normalize geometric mixtures while 
this novel divergence is proven to be a $f$-divergence~\cite{IG-2016} and retains the property that it amounts to a regularization of the ordinary Jensen--Shannon divergence by an extra additive gap term.
 
Finally, we expressed $\JS_G$  (Eq.~\ref{eq:GJSD-EF}) and $\JS^+_{\tilde{G}}$ (Eq.~\ref{eq:ExtGJSD-EF}) for exponential families, characterized the gap between these two types of divergences as a function of the cumulant and partition functions, and reported  
corresponding explicit formula for the multivariate Gaussian (exponential) family.  
 The G-JSD between Gaussian distributions has already been used successfully in many applications~\cite{deasy2020constraining,kumari2023rds,ni2023learning,sachdeva2024uncertainty,wang2023np,serra2024computation,thiagarajan2025jensen,hanselmann2025emperror}.

\bibliographystyle{plain}
\bibliography{YAJSSymmetrizationRevisedBIB}

\appendix

\section{Notations}\label{sec:notations}

{\small
\begin{supertabular}{ll}
Means:\\
$M_\alpha(a,b)$ & weighted scalar mean\\
$M_\alpha^\phi(a,b)$ & weighted quasi-arithmetic scalar mean for generator $\phi(u)$\\
$A(a,b)$ & arithmetic mean\\
$A_\alpha(a,b)$ & weighted arithmetic mean\\
$G_\alpha(a,b)$ & weighted geometric mean\\
$G(a,b)$ & geometric mean\\
$P_\gamma(a,b)$ & power mean with $P_0=G$ and $P_1=A$\\
$P_{\gamma,\alpha}(a,b)$ & weighted power mean\\
Densities on measure space $(\calX,\calE,\mu)$:\\
$p, p_1, p_2, \ldots$ & normalized density\\
$q, q_1, q_2, \ldots$ & unnormalized density\\
$Z(q)$ & density normalizer $p=\frac{q}{Z(q)}$\\
$Z_M(p_1,p_2)$ & normalizer of $M$-mixture ($\alpha=\frac{1}{2}$)\\
$\hat{Z}_M(p_1,p_2)$ & Monte Carlo estimator of $Z_M(p_1,p_2)$\\
$Z_{M,\alpha}(p_1,p_2)$ & normalizer of weighted $M$-mixture\\
$(p_1p_2)_M$ & $M$-mixture \\
$(p_1p_2)_{M,\alpha}$ & weighted $M$-mixture\\
Dissimilarities, divergences, and distances:\\
$\KL(p_1,p_2)$ & Kullback--Leibler divergence (KLD)\\
$\KL^+(q_1,q_2)$ & extended Kullback--Leibler divergence\\
$\KL^*(p_1,p_2)$ & reverse Kullback--Leibler divergence\\
$H^\times(p_1,p_2)$ & cross-entropy\\
$H(p)$ & Shannon discrete or differential entropy\\
$J(p_1,p_2)$ & Jeffreys divergence\\
$\TV(p_1,p_2)$ & total variation distance\\
$B(p_1,p_2)$ & Bhattacharyya ``distance'' (not metric)\\
$B_\alpha(p_1,p_2)$ & $\alpha$-skewed Bhattacharyya ``distance''\\
$C(p_1,p_2)$ & Chernoff information or Chernoff distance\\
$T(p_1,p_2)$ & Taneja $T$-divergence\\
$I_f(p_1,p_2)$ & Ali-Silvey-Csisz\'ar $f$-divergence\\
$D(p_1,p_2)$ & arbitrary dissimilarity measure\\
$D^*(p_1,p_2)$ & reverse dissimilarity measure\\
$D^+(q_1,q_2)$ & extended dissimilarity measure\\
$\tD(q_1,q_2)$ & projective dissimilarity measure\\
$\tD_\gamma(q_1,q_2)$ & $\gamma$-divergence\\
$\hat{D}^+(q_1,q_2)$ & Monte Carlo estimation of dissimilarity $D^+$\\
Jensen--Shannon divergences and generalizations:\\
$\JS(p_1,p_2)$ & Jensen--Shannon divergence (JSD)\\
$\JS_{\alpha,\beta}(p_1,p_2)$ & $\beta$-weighted $\alpha$-skewed mixture JSD\\
$\JS_M(p_1,p_2)$ &  $M$-JSD for $M$-mixtures\\
$\JS_G(p_1,p_2)$ &  geometric JSD\\
$\JS_{\tG}(p_1,p_2)$ & extended  geometric JSD\\
$\JS_G^*(p_1,p_2)$ & left-sided  geometric JSD (right-sided for $\KL^*$)\\ 
${\JS^+}_{\widetilde{\min}}(p_1,p_2)$ & min-JSD\\
${\JS^+}_{\widetilde{\max}}(p_1,p_2)$ & max-JSD\\
$\Delta_M(p_1,p_2)$ & gap between extended and normalized M-JSDs\\
\end{supertabular}
}
\end{document}